\title{Sheaf Logic, Quantum Set Theory and the Interpretation of Quantum Mechanics.}
\author{J. Benavides\footnote{Department of Mathematics, Ulisse Dini, University of Florence, Italy. navarro@math.unifi.it}}
\newtheorem{defin}{Definition}[section]
\newtheorem{teor}{Theorem}[section]
\newtheorem{corol}{Corollary}[section]
\newtheorem{lema}{Lemma}[section]
\begin{document}

\maketitle

\begin{abstract}
Based on the Sheaf Logic approach to set theoretic forcing,  a hierarchy of Quantum Variable Sets is constructed which generalizes and simplifies the analogous construction developed  by Takeuti on boolean valued models of set theory. Over this model two alternative proofs of Takeuti's correspondence, between self adjoint operators and the real numbers of the model, are given. This approach results to be more constructive showing a direct relation with the Gelfand representation theorem, revealing also  the importance of these results with respect to the interpretation of Quantum Mechanics in close connection with the Deutsch-Everett multiversal interpretation. Finally, it is shown how in this context the notion of genericity and the corresponding generic model theorem can help to explain the emergence of classicality  also in connection with the Deutsch-Everett perspective. 
\end{abstract}

\section{Introduction}

The advent of Quantum Mechanics (QM) and the problems linked to its ontological nature changed our conception of physical reality in a radical way, problems that before had concerned just philosophers of science became central to the physics debate.  Notions like observable, observer  and measurement, which  had not been problematic to the conception of physical theories, became fundamental and subject of numerous controversies. Furthermore, the classical realist conception of physical objects as independent bearers of properties, on which the ontology of classical theories was based, was also challenged;  becoming evident the interpretational  difficulties of the theory, a problem  which physics had never affronted before.\\
\indent Today, more than one hundred years after Max Planck formulated the quantum hypothesis,  we still do not have a settled agreement  about what quantum reality is or if there is something as a quantum reality at all. Nevertheless, the theory has been incredibly successful in its predictive role. For this reason, many physicists  think that probably there is no necessity of an interpretation of the theory that go further than the interpretation linked to its predictive nature. But, despite this predictive success, it has not been possible to conciliate  the theory in its instrumentalist form with  General Relativity (GR), and it is becoming evident that a new formalism, which give us a completely new perspective of the theory, will be needed to solve this problem.\\
\indent Numerous proposals have been advanced to solve the interpretational issues of QM, however,  none of these  have been able to transcend  their heuristic argumentations with a solid mathematical machinery that captures and go further than the classical tools, settling the respective interpretation. Our inability to do so  is probably a sign that behind the understanding of QM hides the necessity to transcend also the classical mathematical formalism that lies at the foundations of the tools used so far to conceive physical theories.\\

In recent years Topos Theory has captured the attention of people working on the foundations of QM as a possible route to reformulate the theory in a way flexible enough to include  relativistic concepts, and where a definite interpretation could be finally settled. The origins of this approach can be traced back to the work developed by Takeuti in 1975 on Boolean Valued Models of Set Theory. Takeuti  proved that in a Boolean Valued model of set theory constructed over a complete boolean algebra of projector operators of a Hilbert space, there is a correspondence between the self adjoint operators which spectral family is contained in the boolean algebra of projections and the real numbers of the boolean valued model \cite{takeuti}\cite{ozawa}.  In those days the importance of this result  respect to its possible relation with interpretational issues of QM was discussed (see \cite{davis})  but no conclusive results were obtained, and maybe due to the fact that the result used  advanced tools of set theory and logic, it did not capture the attention of the physics community.\\
\indent  Recently the work of  C. Isham, A. Doering  \cite{isham} \cite{doering} and others have  brought the attention back to these methods and particularly to the idea that these tools can be used to obtain a new conception of the continuum useful to formulate Quantum Gravity theories and to obtain a new perspective of QM.  Even if not explicitly stated both, Takeuti's and Isham-Doering's approaches, are a reformulation of the old idea proposed several times after the publication of the seminal paper in Quantum Logic (QL) by Von Neumann and Birkhoff, regarding the necessity of a formalism founded over a Quantum Logic as a route to reformulate the theory in a way able to capture the essence of quantum reality. However, it is still not clear that this new approach will give us a better understanding of the theory. Unfortunately, the intrinsic difficulties of Cohen's forcing in the boolean formulation and the abstract categorical machinery of topos theory have obscured the potentiality of these methods to obtain a better picture of QM.\\

In 1995 X. Caicedo introduced what can be considered so far the most user-friendly approach to  Kripke-Joyal Semantics \cite{caicedo} (the semantics intrinsic to a topos), giving a new perspective that avoids the technicalities linked to the categorical tools of topoi. Caicedo introduced a model theory  of variable structures where it is possible to introduce a definition of genericity and a generic model theorem which unifies set theoretic forcing constructions and the classical theorems of model theory as completeness, compactness, omitting types etc..  In this context the approach to set theoretic forcing  generalizes the Scott-Solovay Boolean \cite{scott} and the Fitting intuitionistic  \cite{fitting} formulations, simplifying the constructions and clarifying the  essence of the proofs of  classical independence results as the independence of the continuum hypothesis \cite{benavides} \cite{benavides3}. Another remarkable fact  is how  interesting connections between classical logic and intuitionistic logic are revealed, showing that the logic of sheaves is not just intuitionistic but that constitutes a continuum of logics between classical and intuitionistic logic, where the independence results of set theory and the classical theorems of model theory can be conceived as  a consequence of some limit process over this continuum.\\ 
\indent In this work I apply these tools to QM, the idea  in a few words is to show that the logic intrinsic to QM lies in this continuum of logics and then to show that the emergence of classicality can be conceived as a limit process over these logics. The logic used here will differ from the classical QL of Von Neumann and Birkhoff; its construction arises from the local character of truth of  Sheaf Logic, which allows to introduce some contextual features of QM as those derived  from the Kochen-Specker theorem   and  from the Deutsch-Everett multiversal interpretation of interference phenomena.  Over this logic a hierarchy of \textit{Quantum Variable Sets} is constructed that generalizes and simplifies the Boolean approach of Takeuti. In this model, two alternative proofs of Takeuti's correspondence, between self adjoint operators and the real numbers of the model, are given. This approach results to be more constructive, showing a direct relation with the Gelfand representation theorem,  and revealing also  the importance of these results with respect to the interpretation of QM in close connection with the Deutsch-Everett multiversal interpretation of quantum theory. Finally it is shown how the collapse via generic models of this structure of quantum variable sets can help to explain the emergence of classicality  also in close relation with the Deutsch-Everett perspective\footnote{I have been recently referred to the work of W. Boos \cite{boos} and R. A. Van Wesep \cite{vanw} which suggest the use of the  generic property in an analogous way as proposed here  to explain the emergence of classicality in QM. Even if Boos work is motivated in analogous ideas it is in essence different because it is based on the use of measure algebras and not in boolean algebras of projections. On the other hand Van Wesep approach is more closely related to the ideas here presented, it is particularly interesting the study of the emergence of probability that he proposes, however he does not mention Takeuti's result which is fundamental to understand in which sense these tools can explain the collapse to a classical world. Both papers are also based on the classical Boolean approach to Cohen's forcing. As I am suggesting here, Caicedo's work simplifies remarkably the boolean approach in a way  closely related  with topos theoretic tools and that results fundamental to understand how these methods can  help to settle an interpretation of QM. Due to I concluded this work before knowing of the existence of Boos and Van Wesep papers I will postpone the discussion of their results in the context of the tools here presented to a future work.}.\\

I have divided this work in two main sections, the first one is an introduction to sheaf logic as developed in Caicedo's work, giving here an approach oriented to physicists. Even if it is not possible to get a complete picture of these tools without some knowledge of model theory and set theory the guiding ideas are very natural and can be followed without being an expert in these fields. The second part deals with the construction of the hierarchy of Quantum Variable Sets and the results cited above. Hopefully, the ideas contained here will be also useful to people  working with applications of topos theory in physics, particularly bringing attention to the fundamental role of logic that sometimes is forgotten giving prevalence to the geometric character of the theory. As I argue in this work, the role of logic in these tools can be fundamental to obtain a new satisfactory picture of QM and probably the connections between geometry and logic, the most remarkable feature of topoi, will be  fundamental in the construction of a future Quantum Gravity theory.

\section{Sheaves of Structures}

The notion of a  Sheaf of structures has its origins on the study of the continuation of analytic functions in the XIX century, but the modern definition  was introduced just few years after the end of the second world war, by Leray, Cartan and Lazard, in the context of Algebraic topology and Algebraic Geometry. However, the idea of a sheaf of structures is naturally contained in the conception of  spacetime which derives from the Galilean relativity principle. The Galilean relativity principle asserts that the dynamical laws are the same when are referred to any frame in uniform motion. This principle forces us to abandon the Aristotelian Picture of a fixed and absolute background space which constitutes a preferential frame where physical objects move. Thus, the idea of considering a point in space as the same point an instant later looses its meaning. Instead,  Galilean dynamics implies that there is not one fixed Euclidean 3-dimensional space where the physical world is contained, but  that to each instant corresponds  a  different 3-dimensional world \cite{penrose},  which is attached with the other worlds in a continuous way respect to the temporal order(see fig. \ref{Gspace}). 

\begin{figure}
\centering
\includegraphics[scale=0.50]{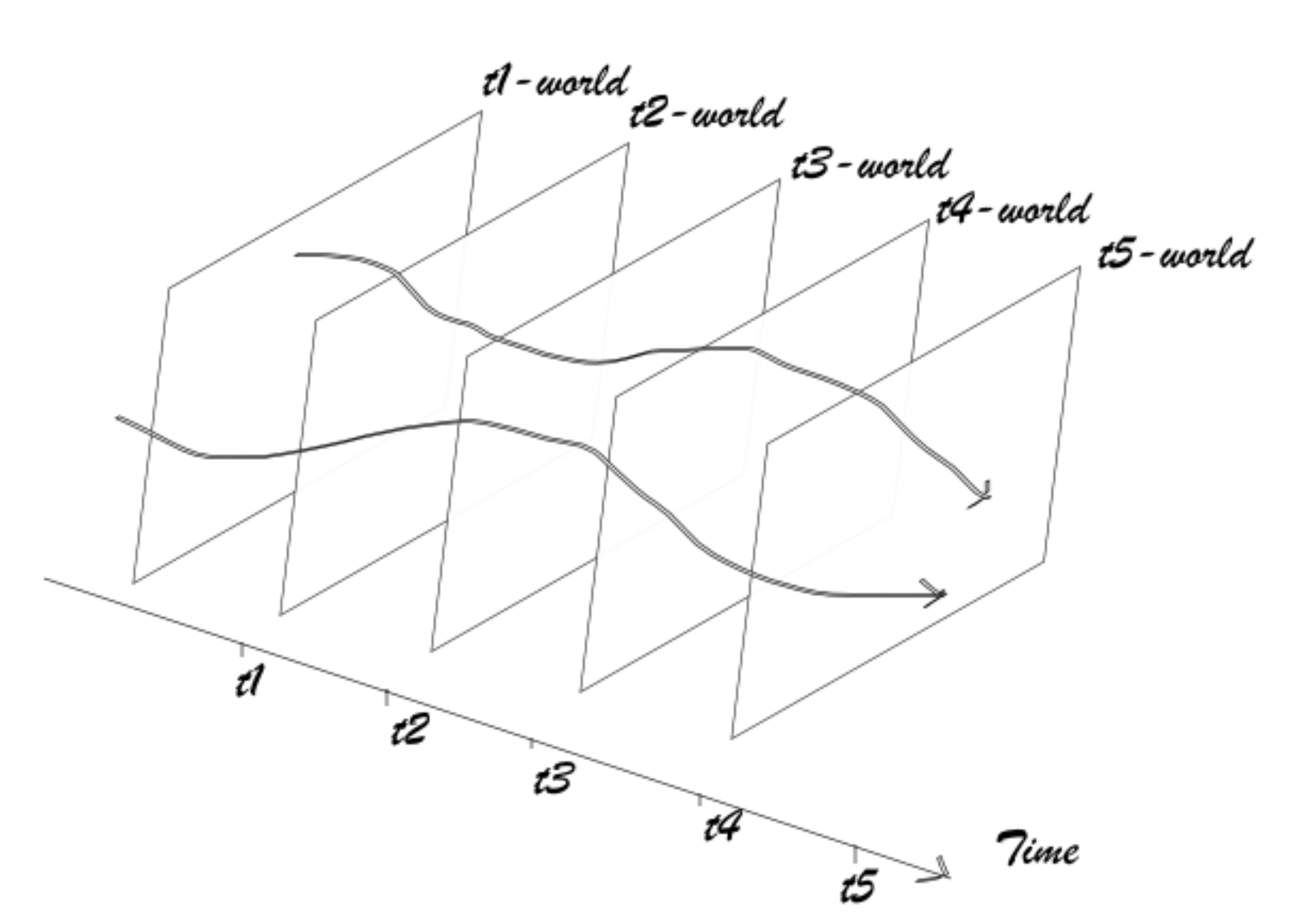}
\caption{Galilean Spacetime}
\label{Gspace}
\end{figure}

Each of these worlds is formed by objects which represent a snapshot of an extended object  in time. There also exist functions and relations defined over each world, which represent and describe the instantaneous physical attributes of these extended objects.  The continuous attachment refers to the possibility of seeing these extended objects as functions that at each time assign an object in the respective world in such a way that they form a continuous object as seeing from the perspective of the hole attachment. In the same way the functions and relations that give instantaneous descriptions attach also in a continuous way when seen as functions and relations of the attachment of the different instantaneous worlds.\\

We can summarize this description of Galilean spacetime as follows. We have a topological space $X=$\textit{temporal line}, for each element $x\in X$ there is an structure $\mathfrak{A}_x$ which is formed by a world $E_x$ which objects constitute a snapshot  of  extended objects in time, functions $f^{x}_1,f^{x}_2,...$ and relations $R^{x}_1,R^{x}_2,...$ which give the instantaneous properties of the extended objects at the instant $x$. Even if at each instant  $x$ we have different worlds $E_x$, this worlds are of a same kind in the sense that the functions, relations and objects can be given  analogous interpretations at each instant. The different worlds $E_x$ attach in an extended universe $E$ in such a way that the attachment of the objects, functions and relations are continuous as seeing as extended objects, functions or relations defined in this extended world. In a few words at each instant $x$ we have an structure $\mathfrak{A}_x=(E_x,R^{x}_1, R^{x}_2...,f^{x}_1,f^{x}_2...)$ formed by a world of  instantaneous snapshots of extended objects, functions and relations that attach in a continuous way. All this features  can be easily formalized and generalized using the notion of a Sheaf of Structures. The ideas presented in this section follow \cite{caicedo},  the proofs of the results presented here can be found there, however  I will try to motivate the results and the main ideas can be understood without knowing all the technical details.

\subsection{Sheaves and Presheaves}

\begin{defin}
Let $X$ be a topological space. A sheaf over $X$ is a couple $(E,p)$, where $E$ is a topological space and $p:E\rightarrow X$ is a local homeomorphism,  or in other words  a continuous function such that for each $e\in E$ there exists a neighbourhood $V$ of $e$ such that:
\begin{enumerate}
\item $p(V)$ is open in $X$
\item $p\upharpoonright_{V}:V\rightarrow p(V)$ is an homeomorphism. 
\end{enumerate}
Given an open set $U$ in $X$, a function $\sigma:U\rightarrow E$ such that  $p\circ \sigma=id_{U}$ is called a local section, if $U=X$, $\sigma$ is called  a global section. The set $p^{-1}(x)\subset E$ for $x\in X$ is called the fibre over $x$.
\end{defin}   

Using the sheaf notion it becomes easy to define a generalization which  captures the essence  contained in the picture of  Galilean spacetime as described above. The formalism of the definition given below contains some technicalities of model theory, but the essence of the definition can be understood even if you are not used to the language of this field.

\begin{defin} Given a fix type of structures $\tau=(R_1,...,f_1,...,c_1,...)$ a sheaf of $\tau$-structures
$\frak{A}$ over a topological space $X$ is given by:\\
a-) A sheaf $( E,p)$ over $X$.\\
b-) For each  $x\in X$,  a $\tau$-structure $\mathfrak{A}_x=(E_x,R^{x}_1, R^{x}_2...,f^{x}_1,...,c^{x}_1,...)$,
where  $E_x=p^{-1}(x)$ (the fiber that could be empty) is  the universe of the $\tau$-structure  $\mathfrak{A}_x$, and the following conditions are satisfied: \\
i. $R^{\mathfrak{A}}=\bigcup_x R_x$ is open in $\bigcup_x
E^{n}_{x}$ seeing as subspace of $E^n$, where $R$ is an n-ary relation symbol.\\
ii.  $f^{\mathfrak{A}}=\bigcup_{x}f_x:\bigcup_x
E_{x}^{m}\rightarrow \bigcup_x E_x$ is a continuous function, where $f$ is an $m$-parameter function symbol.\\
iii. $h:X\rightarrow E$  such that  $h(x)=c_x$, where $c$ is a constant symbol, is continuous.
\end{defin}
 
The type of a structure mentioned in the definition above is a concept used in logic and model theory.  In a few words a type $\tau$  is a language with symbols of relations, functions and constants. A $\tau$-structure $\mathfrak{A}$ is  formed by a set of objects $A$ where the different symbols of relations, functions and constants find an interpretation as functions or relations  over this set and the constants as elements of this set. For example if  $\tau=(\widehat{\times},\widehat{1})$, the structure which has as universe  the rational numbers $\mathbb{Q}$, where $\widehat{ \times}$ is interpreted as the multiplication of rational numbers and $\widehat{1}$ is interpreted as the number $1$ is a $\tau$-structure. This can be rather confusing but in some sense is just telling us that we are attaching structures of the same kind. On the other hand the properties (i), (ii),(iii)  determine that the structures attach in a smooth or continuous way.\\
\indent  A sheaf of structures is a space  extended over the base space $X$ of the sheaf  as Galilean spacetime extends over time. The elements of this space will not be the points of $E$ but the sections of the sheaf conceived as extended objects. The single values of these sections represent just  punctual descriptions of the extended object.\\

Let $U\subset X$ be an open subset of the base space, the set of sections defined over $U$, 
\[\mathfrak{A}(U)=\{\sigma:U\rightarrow E : \sigma\text{ continuous  and } \sigma(x)\in E_x\},\]
can be seen also as the universe of an structure of the same type $\tau$ of the sheaf of structures. This follows from the fact that the continuous attachment guarantees that if, for instance,  we have sections $\sigma_1,...,\sigma_n$ defined over an open set $V$ and some relation $R(\sigma_1(x),...,\sigma_n(x))$ holds at the node $x\in V$, taking $U'=p^{-1}(Im((\sigma_1,...,\sigma_n))\cap R^{\mathfrak{A}})$ which is an open set by the definition above, we have that  for $y\in U=U'\cap V$, $R(\sigma_1(y),...,\sigma_n(y))$ holds. Then we can say that $R(\sigma_1,...,\sigma_n)$ holds in $U$ if $R(\sigma_1(y),...,\sigma_n(y))$ for all $y\in U$. In an analogous way if $f^{\mathfrak{A}}(\sigma_1(x),...,\sigma_n(x))=\mu(x)$ for some section $\mu$ defined over $V$ and a $n$-parameter function symbol $f$, by the definition above we have that $f^{\mathfrak{A}}\circ (\sigma_1,...,\sigma_n)$ is continuous, then there exists an open neighbourhood $U$ of $x$ such that $f^{\mathfrak{A}}\circ (\sigma_1,...,\sigma_n)(U)\subseteq Im(\mu)$, thus $f^{\mathfrak{A}}(\sigma_1(y),...,\sigma_n(y))=\mu(y)$ for all $y\in U$. Using this we can define a function $f^{\mathfrak{A}(U)}$ such that $f^{\mathfrak{A}(U)}(\sigma_1,...,\sigma_n)=\mu$ if $f^{\mathfrak{A}}(\sigma_1(y),...,\sigma_n(y))=\mu(y)$ for all $y\in U$. Considering these kind of structures we have that  for any open set $V\subset U$ the restriction of the sections define a natural homomorphism  $\rho$ (i.e a function which conserve the relations and commute with functions) between $\mathfrak{A}(U)$ and $\mathfrak{A}(V)$:

\begin{align*}
\rho_{UV}:\mathfrak{A}(U) &\rightarrow \mathfrak{A}(V)\\
\sigma &\mapsto \sigma\upharpoonright_V.
\end{align*}

Sometimes it will be easier to define the structures $\mathfrak{A}(U)$\footnote{Here we are identifying the structure with the universe of the structure $\mathfrak{A}(U)$ this is something common in model theory to avoid some  excess of formalism.} than the complete sheaf of structures, as it will be the case of the hierarchy of Quantum Variable Sets. However  given the structures $\mathfrak{A}(U)$ for the open sets of $X$ and the morphisms $\rho_{UV}$, it is possible to reconstruct the sheaf of structures. To show how to do this, we need two important definitions.

\begin{defin}
A presheaf of structures of type $\tau$ over $X$ is an assignation $\Gamma$, such that to each open set $U \subset X$ is assigned a $\tau$-structure $\Gamma(U)=(\Gamma(U), R^{\Gamma(U)}_1,..., f^{\Gamma(U)}_1,...,c^{\Gamma(U)}_1,..)$, and if $V\subset U$ it is assigned an homomorphism $\Gamma_{UV}$ which satisfies $\Gamma_{UU}=Id_{\Gamma(U)}$ and $\Gamma_{VW}\circ\Gamma_{UV}=\Gamma_{UW}$ if $W\subseteq V\subseteq U$.
\end{defin}

It is clear that it is possible to define a presheaf of structures $\Gamma_{\mathfrak{A}}$ from a sheaf of structures assigning to each open set $U$ a $\tau$-structure which  universes are the sets $\mathfrak{A}(U)$,  and the homorphism are the $\rho_{UV}$ defined above.  Given a presheaf of structures we can construct a Sheaf of structures in the next way:

\begin{defin}
Let $\Gamma$ a presheaf of structures over $X$. Let $\mathcal{G}\Gamma$ be the sheaf of structures over $X$ such that each fiber $(\mathcal{G}\Gamma)_x$ is defined by:
\[(\mathcal{G}\Gamma)_x=\dot{\bigcup} _{U\in \mathcal{V}(x)} \Gamma(U)_{/\sim_x},\]
where $\mathcal{V}(x)$ is the set of neighbourhoods of $x$, and given $\sigma\in \Gamma(U)$ and $\lambda\in \Gamma(V)$,
\[\sigma\sim_x\lambda \Leftrightarrow \exists W\in \mathcal{V}(x), W\subseteq U\cap V\text{ such that } \Gamma_{UW}(\sigma)=\Gamma_{VW}(\lambda).\]
Let $[\sigma]_x$ the equivalence class of $\sigma$ respect to $\sim_x$. We have that:
\[([\sigma_1]_x,...,[\sigma_n]_x)\in R^{x}\Leftrightarrow \exists U\in\mathcal{V}(x)\text{ such that }(\sigma_1,...,\sigma_n)\in R^{\Gamma(U)}\]
\[f([\sigma_1]_x,...,[\sigma_n]_x)=[f^{\Gamma(U)}(\sigma_1,...,\sigma_n)]_x.\]
The space of the fibers $\bigcup_x(\mathcal{G}\Gamma)_x$ is given the topology generated by the images of the sections
\begin{align*}
a_{\sigma}:U&\rightarrow E\\
x & \mapsto [\sigma]_x.
\end{align*}
The sheaf $\mathcal{G}\Gamma$ is called the sheaf of germs of $\Gamma$.
\end{defin}

This definition can be understood using the analogy of  Galilean spacetime in the next way. If to each time interval we assign  objects such that their history or part of it develops in such interval, we can recover the instantaneous perspective  identifying two objects at each instant  if their histories coincide in an interval of time containing that instant. In the same way a relation will hold from the instantaneous point of view if it holds in a time interval containing the respective instant. This kind of contextuality will play a fundamental role in the logic that governs these models as we will see below.\\

The sheaf of germs $\mathcal{G}\Gamma_{\mathfrak{A}}$ associated to the presheaf of   sections $\Gamma_{\mathfrak{A}}$  of a sheaf $\mathfrak{A}$, is naturally isomorphic to the original sheaf. Indeed the function 
\begin{align*}
H:\bigcup_{x\in X}(\mathcal{G}\Gamma_{\mathfrak{A}})_x&\rightarrow E\\
[\sigma]_x\mapsto \sigma(x),
\end{align*}  
defines a natural isomorphism which sends in an isomorphic way each fiber $(\mathcal{G}\Gamma_{\mathfrak{A}})_x$ to the fiber $E_x$. On the other hand given a presheaf $\Gamma$ the  presheaf $\Gamma_{\mathcal{G}\Gamma}$ associated to the sheaf of germs $\mathcal{G}\Gamma$, results also isomorphic to the original presheaf just if the presheaf satisfies a further condition.

\begin{defin}
A presheaf of structures is said to be exact, if given $U=\bigcup_i U_i$ and $\sigma_i\in\Gamma(U_i)$, such that if 
\[\Gamma_{U_i,U_i\cap U_j}(\sigma_i)=\Gamma_{U_j,U_i\cap U_j}(\sigma_j) \text{ for all } i,j;\]
there exists an unique $\sigma\in\Gamma(U)$ such that $\Gamma_{UU_i}(\sigma)=\sigma_i$ for all i. And the same holds for the relations i.e if we have  some relations $R^{\Gamma(U_i)}_i, R^{\Gamma(U_j)}_j$ which are sent by the homomorphisms $\Gamma_{U_i,U_i\cap U_j}$, $\Gamma_{U_j,U_i\cap U_j}$ to a same relation for all $i,j$. There exists an unique relation $R^{\Gamma(U)}$ which is sent by the homomorphism $\Gamma_{UU_i}$ to the relation $R^{\Gamma(U_i)}_i$ for all $i$.
\end{defin}

We have then the next result.

\begin{lema}
If $\Gamma$ is an exact presheaf  then it results isomorphic to  the presheaf $\Gamma_{\mathcal{G}\Gamma}$ associated to the sheaf of germs $\mathcal{G}\Gamma$, in the sense that $\Gamma_{\mathcal{G}\Gamma}(U)\cong \Gamma(U)$ as structures and the homomorphisms $\Gamma_{UV}$ transform in the homomorphism $\Gamma_{\mathcal{G}\Gamma_{UV}}$.
\end{lema}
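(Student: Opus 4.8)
The plan is to make explicit the isomorphism already latent in the construction of $\mathcal{G}\Gamma$. For each open $U\subseteq X$ define $\Phi_U\colon\Gamma(U)\to\Gamma_{\mathcal{G}\Gamma}(U)$ by sending a section $\sigma$ to the map $a_\sigma\colon x\mapsto[\sigma]_x$. That $a_\sigma$ is a continuous section of the germ space, hence genuinely an element of $\Gamma_{\mathcal{G}\Gamma}(U)$, is immediate, since the topology on $\bigcup_x(\mathcal{G}\Gamma)_x$ was defined to be the one generated by the images of exactly these maps. So everything reduces to showing that, when $\Gamma$ is exact, each $\Phi_U$ is a bijection which is a $\tau$-homomorphism in both directions, and that the family $\{\Phi_U\}_U$ is natural, i.e. $\Phi_V\circ\Gamma_{UV}=\Gamma_{\mathcal{G}\Gamma_{UV}}\circ\Phi_U$ for $V\subseteq U$.

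\emph{Bijectivity.} For injectivity, if $a_\sigma=a_\lambda$ then $[\sigma]_x=[\lambda]_x$ for all $x\in U$, so by the definition of $\sim_x$ there is, for each $x$, a neighbourhood $W_x\subseteq U$ with $\Gamma_{U W_x}(\sigma)=\Gamma_{U W_x}(\lambda)$; the $W_x$ cover $U$, and the uniqueness clause in the definition of exactness forces $\sigma=\lambda$. For surjectivity, take $\eta\in\Gamma_{\mathcal{G}\Gamma}(U)$. Each $\eta(x)$ has the form $[\sigma^x]_x$ for some $\sigma^x\in\Gamma(U_x)$ with $U_x$ a neighbourhood of $x$; using continuity of $\eta$ together with the description of the basic opens as images of the maps $a_{\sigma^x}$, we may shrink $U_x$ so that $\eta=a_{\sigma^x}$ on all of $U_x$. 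On an overlap $U_x\cap U_{x'}$ we then have $[\sigma^x]_y=[\sigma^{x'}]_y$ for every $y$, and injectivity applied over $U_x\cap U_{x'}$ gives $\Gamma_{U_x,\,U_x\cap U_{x'}}(\sigma^x)=\Gamma_{U_{x'},\,U_x\cap U_{x'}}(\sigma^{x'})$. The gluing clause of exactness produces a unique $\sigma\in\Gamma(U)$ with $\Gamma_{UU_x}(\sigma)=\sigma^x$ for all $x$, and then $a_\sigma$ agrees with $\eta$ on each $U_x$, so $a_\sigma=\eta$.

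\emph{Homomorphism and naturality.} That $\Phi_U$ commutes with the function symbols and carries constants to constants is read off directly from the defining formula $f([\sigma_1]_x,\dots,[\sigma_n]_x)=[f^{\Gamma(U)}(\sigma_1,\dots,\sigma_n)]_x$; likewise $R^{\Gamma(U)}(\sigma_1,\dots,\sigma_n)$ implies $R^{\Gamma_{\mathcal{G}\Gamma}(U)}(a_{\sigma_1},\dots,a_{\sigma_n})$ straight from the definition of $R^x$. For the reverse implication one invokes the relational half of the exactness hypothesis: if the tuple of germ-sections lies in $R^{\Gamma_{\mathcal{G}\Gamma}(U)}$, then locally around each point $(\sigma_1,\dots,\sigma_n)$ lies in the corresponding relation of $\Gamma(U_x)$, and these local relations are identified under the restriction homomorphisms, so by the uniqueness-of-gluing clause for relations they come from the single relation $R^{\Gamma(U)}$, which therefore contains $(\sigma_1,\dots,\sigma_n)$. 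Naturality is the identity $a_{\Gamma_{UV}(\sigma)}=a_\sigma\upharpoonright_V$, which holds because $[\Gamma_{UV}(\sigma)]_x=[\sigma]_x$ for $x\in V$ by functoriality of the maps $\Gamma_{\bullet\bullet}$.

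\textbf{Main obstacle.} The delicate point is the overlap compatibility in the surjectivity argument: a priori the continuity of $\eta$ only yields pointwise coincidence of germs, $[\sigma^x]_y=[\sigma^{x'}]_y$ for each individual $y$, which is weaker than the equality of restricted sections that the gluing clause requires. Promoting pointwise germ-agreement to honest agreement of $\Gamma_{\cdot,\,U_x\cap U_{x'}}(\sigma^x)$ with $\Gamma_{\cdot,\,U_x\cap U_{x'}}(\sigma^{x'})$ is precisely where the uniqueness (separatedness) half of exactness is indispensable, and it is essential to have shrunk the $U_x$ beforehand so that $\eta=a_{\sigma^x}$ holds throughout $U_x$, not merely at $x$. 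The non-obvious direction of relation-preservation has the same character and leans on the parallel uniqueness clause for relations.
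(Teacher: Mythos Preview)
Your argument is correct and is the standard sheafification proof: define $\Phi_U(\sigma)=a_\sigma$, use separatedness for injectivity, use continuity plus gluing for surjectivity, and handle the structure-preservation and naturality directly from the definitions. The paper itself does not supply a proof of this lemma; it states the result and defers to Caicedo's paper \cite{caicedo}, so there is nothing to compare against beyond noting that your route is exactly the one any proof must take.

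One small remark on presentation: your reverse implication for relations (``if $(a_{\sigma_1},\dots,a_{\sigma_n})\in R^{\Gamma_{\mathcal{G}\Gamma}(U)}$ then $(\sigma_1,\dots,\sigma_n)\in R^{\Gamma(U)}$'') appeals to the relational exactness clause, which in the paper is phrased somewhat loosely as a gluing condition on the relations themselves rather than on membership of tuples. What you actually need---and what that clause is surely intended to encode---is that if the restrictions $(\Gamma_{UU_x}(\sigma_1),\dots,\Gamma_{UU_x}(\sigma_n))$ lie in $R^{\Gamma(U_x)}$ for a cover $\{U_x\}$ of $U$, then the global tuple lies in $R^{\Gamma(U)}$. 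Your invocation is correct in spirit, but it would do no harm to spell this out explicitly, since the paper's wording of the relational half of exactness is not entirely transparent.
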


We will define the Quantum Hierarchy of variable sets defining an exact presheaf, the above results allow to deal indistinctly with the presheaf of structures and the associated sheaf.

\subsection{The Logic of  Sheaves of Structures}\label{logic1}

The notion of `truth" in classical physics is a contextual one. When we affirm that a property holds for a certain object at some instant,  we are referring to  a measurement realized in an extended interval of time containing that instant. The mathematical models we use to describe such situations, even if based on an absolute notion of truth, permit to capture this contextual character  using the notion of limit, which allows to define instantaneous properties in a coherent way. In some sense  the notion of limit is what makes classical logic work  in the conception of the continuum. It is surprising then,  that mathematical models which are based on an absolute notion of truth  can capture and describe physical reality, where the notion of truth is contextual, in such an effective way. Nevertheless,   it is probably this lack of contextuality what makes the classical formalism inappropriate to give a complete picture of quantum theory. In QM the notion of contextuality recovers total  new meanings related to the incompatibility of observables, and interference phenomena as we will see. Therefore, it can be fundamental to have a formalism that allows to include a notion of truth  which can contain these contextual features. This is the case of the sheaves of structures, which have a logic based on the next \textit{contextual-truth} paradigm:

\begin{center}
\textit{If a property for an extended object holds in some point of its domain then it has to hold in a neighbourhood of that point.}
\end{center}

As the objects of a sheaf of structures are the sections of the sheaf, the logic which governs them should define when a property for an extended object holds in a point of its domain of definition. The next definition contains a lot of  logical language but it is very natural.

\begin{defin} \label{logic}
Let  $L_{\tau}$ be a first order language of type $\tau$. Given a sheaf of structures  $\frak{A}$ of type $\tau$ over $X$, and a proposition 
$\varphi(v_1,...,v_n)\in L_{\tau}$ we can define by induction 
\[\frak{A}\Vdash_x
\varphi(\sigma_1,...,\sigma_n)\] 
(Which means $\frak{A}$ forces 
$\varphi(\sigma_1,...,\sigma_n)$ in $x\in X$ for  the sections
$\sigma_1,...,\sigma_n$ of $\frak{A}$ defined in $x$ or in a more elementary fashion, the property $\varphi$ holds at the node $x$ for the sections $\sigma_1,...\sigma_n$ in the sheaf $\mathfrak{A}$)  :\\

1. If $\varphi$ is an atomic formula and $t_1,...,t_k$ are $\tau$-terms: \[\frak{A}\Vdash_x
(t_1=t_2)[\sigma_1,...,\sigma_n]\Leftrightarrow
t_1^{\frak{A}_x}[\sigma_1(x),...,\sigma_n(x)]=t_2^{\frak{A}_x}[\sigma_1(x),...,\sigma_n(x)]\]
\[\frak{A}\Vdash_x
R(t_1,...,t_n)[\sigma_1,...,\sigma_n]\Leftrightarrow
(t_1^{\frak{A}_x}[\sigma_1(x),...,\sigma_n(x)],...,t_n^{\frak{A}_x}[\sigma_1(x),...,\sigma_n(x)])\in
R^{x}\]
1'. The equality between extended objects, or objects defined in function of extended objects holds at some node if and only if  the equality holds for their punctual descriptions. Analogous for the relations.\\

2. $\frak{A}\Vdash_x
(\varphi\wedge\psi)[\sigma_1,...,\sigma_n]\Leftrightarrow
\frak{A}\Vdash_x\varphi[\sigma_1,...,\sigma_n]$ and
$\frak{A}\Vdash_x \psi[\sigma_1,...,\sigma_n]$\\
2'. The conjunction of two properties holds for some extended objects at some node if and only if each property holds for those extended objects at the same node. \\

3. $\frak{A}\Vdash_x
(\varphi\vee\psi)[\sigma_1,...,\sigma_n]\Leftrightarrow
\frak{A}\Vdash_x\varphi[\sigma_1,...,\sigma_n]$ or
$\frak{A}\Vdash_x\psi[\sigma_1,...,\sigma_n]$.\\
3'. The disjunction of two properties holds for some extended objects at some node if and only if one of the properties hold for the extended objects in that node.\\

4.$\frak{A}\Vdash_x\neg\varphi[\sigma_1,...,\sigma_n]\Leftrightarrow$
exists $U$ neighbourhood of $x$ such that for all $y\in U$,
$\frak{A}\nVdash_y \varphi[\sigma_1,...,\sigma_n]$.\\
4'.The negation of a property for some extended objects hold at some node, if and only if there exists a neighbourhood of the node such that at each point in that neighbourhood the property does not hold for the extended objects.\\

5. $\frak{A}\Vdash_x
(\varphi\rightarrow\psi)[\sigma_1,...,\sigma_n]\Leftrightarrow$
Exists  $U$ neighbourhood of  $x$ such that  for all $y\in U$ if 
$\frak{A}\Vdash_y\varphi[\sigma_1,...,\sigma_n]$ then
$\frak{A}\Vdash_y\psi[\sigma_1,...,\sigma_n]$.\\

6.  $\frak{A}\Vdash_x \exists
v\varphi(v,\sigma_1,...,\sigma_n)\Leftrightarrow$ exists $\sigma$
defined in $x$ such that
$\frak{A}\Vdash_x\varphi(\sigma,\sigma_1,...,\sigma_n)$\\

7. $\frak{A}\Vdash_x\forall
v\varphi(v,\sigma_1,...,\sigma_n)\Leftrightarrow$ exists $U$
neighbourhood of $x$ such that for all $y\in U$ and all $\sigma$
define in $y$,
$\frak{A}\Vdash_y\varphi[\sigma,\sigma_1,...,\sigma_n]$.

(In 4,5,7 \quad $U$ has to satisfy $U\subseteq \bigcap_{i}dom(\sigma_i)$)

\end{defin}

Numerals 1',2',3', 4'  clarify the content of the definition without the technicalities of the logical language; 5,6, 7 are more clearly understood using  formal language. It is clear from 4,5, 7 how this is a contextual logic, however this contextuality is  better expressed in the next result which tell us that a property is verified at some node if and only if it is verified in a neighbourhood of that node.

\begin{corol} \label{localtruth}
$\frak{A}\Vdash_x \varphi[\sigma_1,...,\sigma_n]$ if and only if there exists a neighbourhood  $U$ of $x$, such that $\frak{A}\Vdash_y
\varphi[\sigma_1,...,\sigma_n]$ for all $y\in U$.
\end{corol}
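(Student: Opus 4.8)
The plan is to prove the corollary by induction on the complexity of the formula $\varphi$, following clause by clause the inductive definition of $\Vdash_x$ in Definition \ref{logic}. The backward implication is immediate: if $\mathfrak{A}\Vdash_y\varphi[\sigma_1,\dots,\sigma_n]$ holds for every $y$ in some neighbourhood $U$ of $x$, then in particular it holds at $y=x$. So the whole work is in the forward implication, where, assuming $\mathfrak{A}\Vdash_x\varphi[\sigma_1,\dots,\sigma_n]$, one must exhibit a neighbourhood of $x$ on which forcing holds at every point.

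First I would treat the atomic case, which is where the sheaf structure is actually used. Observe that, by a subsidiary induction on terms together with conditions (ii) and (iii) in the definition of a sheaf of structures, each map $t^{\mathfrak{A}}\circ(\sigma_1,\dots,\sigma_n)$ is a continuous local section over $\bigcap_i \mathrm{dom}(\sigma_i)$. If $\mathfrak{A}\Vdash_x (t_1=t_2)[\sigma_1,\dots,\sigma_n]$, the sections $t_1^{\mathfrak{A}}\circ(\sigma_1,\dots,\sigma_n)$ and $t_2^{\mathfrak{A}}\circ(\sigma_1,\dots,\sigma_n)$ agree at $x$; since $p\colon E\to X$ is a local homeomorphism, two sections that agree at a point agree on an entire neighbourhood of it, which supplies the required $U$. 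For a relation symbol $R$, condition (i) says $R^{\mathfrak{A}}$ is open in the appropriate power of $E$, so the continuous map $(t_1^{\mathfrak{A}}\circ(\sigma_1,\dots,\sigma_n),\dots)$, landing in $R^{\mathfrak{A}}$ at $x$, lands in $R^{\mathfrak{A}}$ throughout a neighbourhood of $x$, again giving $U$.

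For the inductive step, the connectives $\neg$, $\rightarrow$ and $\forall$ are essentially free: clauses 4, 5 and 7 already produce a neighbourhood $U$ of $x$ witnessing the statement, and one checks that the same $U$, viewed as a neighbourhood of an arbitrary $y\in U$, witnesses forcing at $y$, because the defining condition of each of those clauses (``no point of $U$ forces $\varphi$'', ``at each point of $U$, forcing $\varphi$ implies forcing $\psi$'', ``every section defined at a point of $U$ forces $\varphi$'') is inherited verbatim. For $\varphi\wedge\psi$ one applies the inductive hypothesis to $\varphi$ and to $\psi$ separately and intersects the two neighbourhoods. For $\varphi\vee\psi$, if, say, $\mathfrak{A}\Vdash_x\varphi$, the inductive hypothesis gives $U$ with $\mathfrak{A}\Vdash_y\varphi$ for all $y\in U$, hence $\mathfrak{A}\Vdash_y(\varphi\vee\psi)$ there; the other disjunct is symmetric. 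For $\exists v\,\varphi(v,\sigma_1,\dots,\sigma_n)$, pick a section $\sigma$ defined at $x$ with $\mathfrak{A}\Vdash_x\varphi(\sigma,\sigma_1,\dots,\sigma_n)$; the inductive hypothesis, applied within $\mathrm{dom}(\sigma)\cap\bigcap_i\mathrm{dom}(\sigma_i)$, yields a neighbourhood $U$ on which $\mathfrak{A}\Vdash_y\varphi(\sigma,\sigma_1,\dots,\sigma_n)$, and on $U$ this same $\sigma$ witnesses $\mathfrak{A}\Vdash_y\exists v\,\varphi$. The only step with genuine content is the atomic one, resting on the sheaf axioms; everything else is a bookkeeping induction, and it is worth noting that the cases $\neg,\rightarrow,\forall$ work precisely because their forcing clauses are already formulated ``locally.''
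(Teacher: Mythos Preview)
Your proof is correct and follows the standard induction on formula complexity; the paper itself does not give a proof of this corollary, deferring instead to Caicedo's original article. The only piece the paper does spell out is the atomic case, in the paragraph preceding Definition~\ref{logic} (where it explains why relations and function values propagate from a point to a neighbourhood using the openness of $R^{\mathfrak{A}}$ and the continuity of $f^{\mathfrak{A}}$), and your treatment of that case matches it. One small point worth making explicit: in the $\neg$, $\rightarrow$, $\forall$ clauses you rely on the witnessing neighbourhood $U$ being a neighbourhood of every one of its points, so you should take $U$ open (or replace it by its interior), which is harmless but should be said.
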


Using the above definition it is possible to introduce a local semantics in a natural way, which will be more useful when dealing with presheaves of structures. Given an open subset $U\subset X$, and sections defined over $U$, we say that a proposition about these sections holds in $U$ if it holds at each point in $U$ or in other words:

\[\mathfrak{A}\Vdash_U\varphi[\sigma_1,...,\sigma_n]\Leftrightarrow \forall x\in U, \mathfrak{A}\Vdash_x \varphi[\sigma_1,...,\sigma_n]\]

This definition is determined completely by the next result. 

\begin{teor}[Kripke-Joyal semantics]\label{kripkejoyal}

$\frak{A}\Vdash_U\varphi[\sigma_1,...,\sigma_n]$ is defined by:\\
1. If $\varphi$ is an atomic formula:\\
$\frak{A}\Vdash_U\sigma_1=\sigma_2\Leftrightarrow
\sigma_1\upharpoonright_U=\sigma_2\upharpoonright_U$.\\
$\frak{A}\Vdash_U R[\sigma_1,...,\sigma_n]\Leftrightarrow
(\sigma_1,...,\sigma_n)(U)\subseteq R^{\frak{A}}$.\\
2.
$\frak{A}\Vdash_U(\varphi\wedge\psi)[\sigma_1,...,\sigma_n]\Leftrightarrow
\frak{A}\Vdash_U\varphi[\sigma_1,...,\sigma_n]$ and
$\frak{A}\Vdash_U\psi[\sigma_1,...,\sigma_n]$.\\
3.
$\frak{A}\Vdash_U(\varphi\vee\psi)[\sigma_1,...,\sigma_n]\Leftrightarrow$
there exist open sets $V,W$ such that  $U=V\cup W$,
$\frak{A}\Vdash_V\varphi[\sigma_1,...,\sigma_n]$ and
$\frak{A}\Vdash_W\psi[\sigma_1,...,\sigma_n]$.\\
4.
$\frak{A}\Vdash_U\neg\varphi[\sigma_1,...,\sigma_n]\Leftrightarrow$
For any open set  $W\subseteq
U$,\quad$W\neq\emptyset$,\quad$\frak{A}\nVdash_W\varphi[\sigma_1,...,\sigma_n]$.\\
5. $\frak{A}\Vdash_U
\varphi\rightarrow\psi[\sigma_1,...,\sigma_n]\Leftrightarrow$ for any
open set $W\subset U$,\quad if
$\frak{A}\Vdash_W\varphi[\sigma_1,...,\sigma_n]$ then
$\frak{A}\Vdash_W\psi[\sigma_1,...,\sigma_n]$.\\
6. $\frak{A}\Vdash_U\exists
v\varphi(v,\sigma_1,...,\sigma_n)\Leftrightarrow$ there exists  $\{U_i\}_i$  an open cover of  $U$ and $\mu_i$
sections defined on  $U_i$ such that 
$\frak{A}\Vdash_{U_i}\varphi[\mu_i,\sigma_1,...,\sigma_n]$ 
for all $i$.\\
7.
$\frak{A}\Vdash_U\forall v\varphi(v,\sigma_1,...,\sigma_n)\Leftrightarrow$
for any open set $W\subset U$ and $\mu$ defined on  $W$,
$\frak{A}\Vdash_W\varphi(\mu,\sigma_1,...,\sigma_n)$.
\end{teor}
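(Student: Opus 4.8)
The plan is to argue by induction on the structure of $\varphi$, taking the equation $\mathfrak{A}\Vdash_U\varphi[\sigma_1,\dots,\sigma_n]\Leftrightarrow(\forall x\in U)\ \mathfrak{A}\Vdash_x\varphi[\sigma_1,\dots,\sigma_n]$ as the working definition and deriving each clause $1$--$7$ from the corresponding pointwise clause of Definition \ref{logic} together with the local-truth Corollary \ref{localtruth}. The whole role of Corollary \ref{localtruth} here is that it lets one upgrade a family of pointwise-forcing facts into genuine open-set-forcing facts, and this is precisely what the ``positive'' clauses ($\vee$ and $\exists$) require; the remaining clauses are either a straightforward interchange of quantifiers or use Corollary \ref{localtruth} to manufacture a witnessing neighbourhood.

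For the base case and the clauses that are purely universal over $x$ I would simply translate. For atomic $\varphi$, $(\forall x\in U)\ \sigma_1(x)=\sigma_2(x)$ is $\sigma_1\!\upharpoonright_U=\sigma_2\!\upharpoonright_U$, and $(\forall x\in U)\ (\sigma_1(x),\dots,\sigma_n(x))\in R^x$ is $(\sigma_1,\dots,\sigma_n)(U)\subseteq R^{\mathfrak{A}}=\bigcup_x R_x$; clause $2$ is the trivial interchange of two universal quantifiers over $x$; and for clauses $4$, $5$, $7$ the implication from the displayed clause to the pointwise statement is immediate once one observes that $\mathfrak{A}\Vdash_W\theta$ entails $\mathfrak{A}\Vdash_y\theta$ for every $y\in W$ (and, for $\forall$, that the quantified section is defined at each point of its domain). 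I would then concentrate the work on the remaining directions.

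For $\vee$, assuming $\mathfrak{A}\Vdash_x\varphi\vee\psi$ for all $x\in U$, I would use Corollary \ref{localtruth} to pick, for each $x$, an open $V_x\subseteq U$ on which $\varphi$ is forced (when $\mathfrak{A}\Vdash_x\varphi$) or an open $W_x\subseteq U$ on which $\psi$ is forced (when $\mathfrak{A}\Vdash_x\psi$), and set $V=\bigcup_x V_x$, $W=\bigcup_x W_x$; then $U=V\cup W$ and, by the induction hypothesis, $\mathfrak{A}\Vdash_V\varphi$ and $\mathfrak{A}\Vdash_W\psi$. For $\exists$, from $\mathfrak{A}\Vdash_x\exists v\,\varphi$ at each $x\in U$ I would take a witnessing section defined near $x$, shrink its domain via Corollary \ref{localtruth} to an open $U_x\subseteq U\cap\mathrm{dom}$ of that section on which $\varphi$ (with the restricted section as witness) is forced, and read off the cover $\{U_x\}_{x\in U}$ with the corresponding restricted witnesses $\mu_x$. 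The converses of clauses $3$ and $6$ are immediate, since any $x\in U$ lies in some member of the cover and forcing on an open set restricts to forcing at each of its points. For the reverse directions of $4$, $5$, $7$ I would take the witnessing neighbourhood to be $U$ itself: for $\neg$, if $\mathfrak{A}\Vdash_y\varphi$ held for some $y\in U$ while $\mathfrak{A}\nVdash_W\varphi$ for every nonempty open $W\subseteq U$, then by Corollary \ref{localtruth} $\varphi$ would be forced on a nonempty open neighbourhood of $y$ inside $U$, a contradiction; the $\rightarrow$ case runs along the same lines using Corollary \ref{localtruth}, and the $\forall$ case using that a section defined at $y$ may be restricted to an open neighbourhood of $y$ without changing its pointwise forcing behaviour.

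I expect no real obstacle: the substantive content is concentrated in Corollary \ref{localtruth}, which is exactly what the openness conditions (i)--(iii) in the definition of a sheaf of structures buy us, and everything else is bookkeeping — keeping every chosen neighbourhood inside $U\cap\bigcap_i\mathrm{dom}(\sigma_i)$ as required in Definition \ref{logic}, and using repeatedly that $\mathfrak{A}\Vdash_y\theta[\dots]$ depends only on the values of the sections at $y$, so that restricting sections to smaller open sets is harmless. Carrying this out clause by clause yields the theorem.
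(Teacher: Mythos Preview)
Your proposal is correct and is the standard argument: induction on the complexity of $\varphi$, using Corollary~\ref{localtruth} to upgrade pointwise forcing to open-set forcing in the ``positive'' clauses ($\vee$, $\exists$) and to manufacture the contradictory open in the negative ones. The paper itself does not prove Theorem~\ref{kripkejoyal}; it states the result and defers to \cite{caicedo} for the proof, so there is nothing to compare against beyond noting that your approach is exactly the one Caicedo uses.
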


The logic just defined can be seen as a multivalued logic with truth values that variate over the Heyting algebra of the open sets of the base space $X$. Let $\sigma_1,..., \sigma_n$ sections of a sheaf $\mathfrak{A}$ defined over an open set $U$, we define the \textit{``truth value"} of a proposition $\varphi$ in $U$ as:
\[ [[\varphi(\sigma_1,...,\sigma_n)]]_{U}:=\{x\in U: \mathfrak{A}\Vdash _x \varphi [\sigma_1,...,\sigma_n]\}\]
From corollary \ref{localtruth} we know that $[[\varphi(\sigma_1,...,\sigma_n)]]_{U}$ is an open set, thus we can define a valuation as a topological valuation on formulas:
\[T_{U}:\varphi\mapsto [[\varphi(\sigma_1,...,\sigma_n)]]_{U}.\] 
The definition of the logic allows to define the value of the logic operators in terms of the operations of the algebra of open sets. For instance, the proposition $\neg\varphi$ is valid in a point if there exists a neighbourhood of that point where $\varphi$ does not hold at each point of the neighbourhood, then $\neg\varphi$ holds in the interior of the complement of the set where $\varphi$ holds i.e $[[\neg \varphi]]_{U}= Int (U\setminus [[\varphi]]_{U}),$. Reasoning in analogous way we have:

\begin{enumerate}
\item $[[\sigma_1=\sigma_2]]_{U}=\{x\in U:\sigma_1(x)=\sigma_2(x)\}$.
\item $[[R[\sigma_1,...,\sigma_n]]]_{U}=\{x\in U: (\sigma_1(x),...,\sigma_n(x))\in R^{\mathfrak{A}}\}$
\item $[[\neg \varphi]]_{U}= Int ((U\setminus [[\varphi]]_{U}),$
\item $[[\varphi\wedge \psi]]_{U}=[[\varphi]]_{U}\cap[[\psi]]_{U},$
\item $[[\varphi\vee \psi]]_{U}=[[\varphi]]_{U}\cup [[\psi]]_{U},$
\item $[[\varphi\rightarrow \psi]]_{U}= Int((U\setminus[[\varphi]]_{U})\cup [[\psi]]_{U}),$
\item $[[\exists u \varphi (u)]]_{U}=\bigcup_{\sigma\in \mathfrak{A}(W),  W\subset U}[[\varphi(\sigma)]]_{W},$
\item $[[\forall u\varphi(u)]]_{U}=Int(\bigcap_{\sigma\in \mathfrak{A}(W), W\subset U}[[\varphi(\sigma)]]_{W}).$
\end{enumerate}

It follows then 

\[\mathfrak{A}\Vdash_{U} \varphi[\sigma_1,...,\sigma_n]\leftrightarrow [[\varphi[\sigma_1,...,\sigma_n]]]_{U}=U.\]
 
Therefore, since the open sets form a complete Heyting algebra, and we have defined the logic operators in terms of the algebra operations, we have that the formulas that get the value 1 in a complete Heyting algebras are forced in every node of the sheaf of structures. This proves that the laws of the intuitionistic logic, which are those that assume the value 1 in a Heyting algebra, are forced in each node of a Sheaf of Structures.\\
  
In this way each sheaf of structures is ruled by a logic intermediate between intuitionistic logic and classical logic. Intermediate because the geometry of the spaces which determine the sheaf will determine how close or far of each logic is the intrinsic logic of the sheaf (see next section).  This connection between logic and geometry is a very interesting fact that can be important in a future theory of Quantum Gravity.

\subsection{Excursus: Logic, Physics and Geometry}

In recent years the use of topoi has been suggested  to construct a kind of Quantum Geometry  to unify GR and QM (see for example \cite{crane}), based on the conjectured  inadequacy of the classical concept of manifold to construct a theory of Quantum Gravity. The conjecture tacitly contained there is the necessity  of an intuitionistic version of  spacetime. If an intuitionistic model can capture the essence of quantum mechanics, as I am proposing in this work, it will be natural to think that to unify GR and QM,  we will need an intuitionistic version of spacetime. I want to show how probably within General Relativity, using the notion of Sheaf of Structures, we can  also find strong motivations to think that this is the right route to follow.\\
\indent The connections between Logic and Geometry included within the formalism I am presenting here (and also in the topos formalism), were one of the main motivations to think that probably these methods can be useful in a future theory of Quantum Gravity. As we will see below the logic structure of quantum mechanics in  this context derives in its interpretation. On the other hand, these tools allow to obtain a new description of the continuum which arise from  the logic structure of QM. The continuum is what we use in geometry to measure, the notion of Lorentzian manifold is an expression of the continuum to capture the essence of spacetime. If we can construct a definition of manifold which arise from the logic of QM, we will probably get automatically a Quantum Gravity theory. In other words, in some sense Quantum Mechanics represents Logic, and General Relativity is in essence Geometry; if we want to connect these two, the sheaf of structures formalism (or the Topoi formalism) seems to be an appropriate route to follow.\\

To understand better what I mean by  connections between logic and geometry lets consider a simple example.  Consider $E=\mathbb{R}\cup\{q\}$ and $X=\mathbb{R}$, where $X$ has the usual topology and $E$ has the usual topology of $\mathbb{R}$ plus the open neighbourhoods of $q$ which are of the form $(U\setminus \{0\})\cup\{a\}$ where $U$ is a neighbourhood of $0$. 

\begin{figure}
\centering
\includegraphics[scale=0.40]{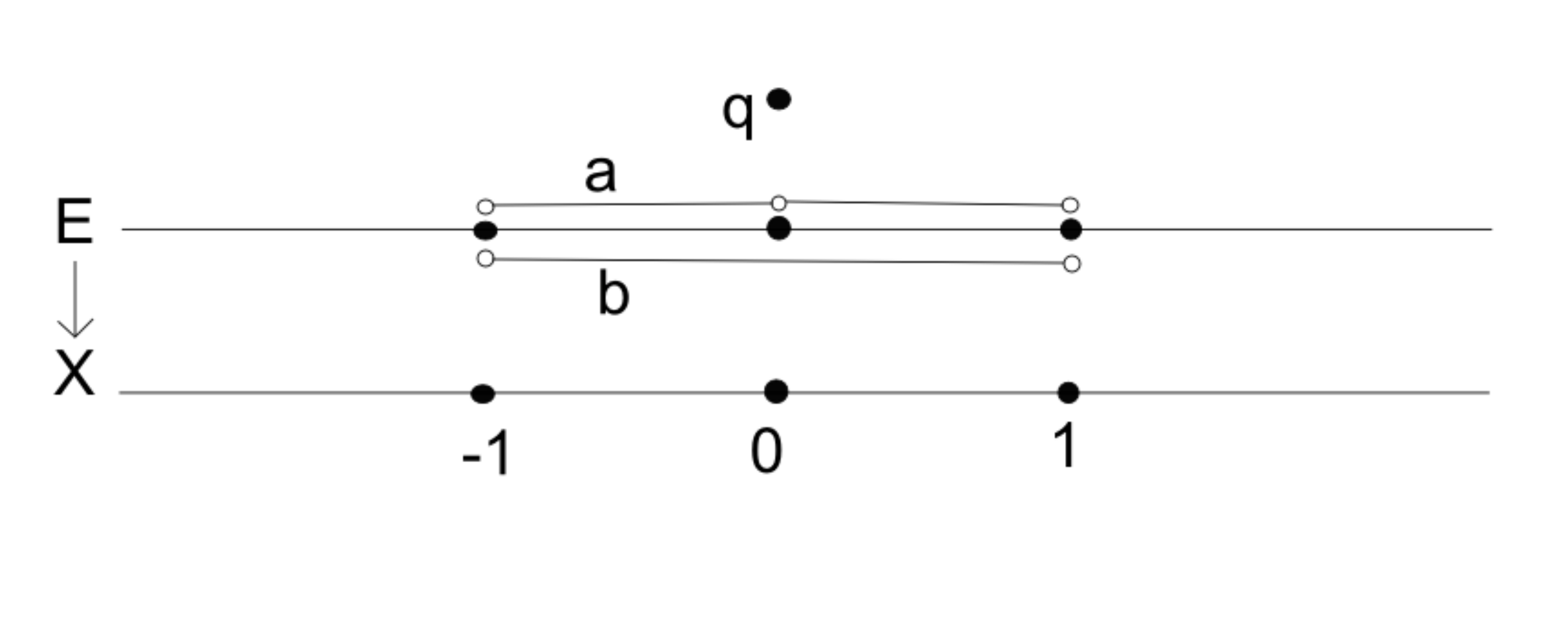}
\caption{The Hausdorff property in E determines the validity of the excluded middle property. A geometric property determines a logic property.}
\label{hausdorff}
\end{figure}

From $E$ and $X$ we can construct a natural sheaf  with $p:E\rightarrow X$ such that $p(x)=x$ if $x\neq q$ and $p(q)=0$ (see Fig \ref{hausdorff}). Consider the sections $a:(-1,1)\rightarrow E$, $b:(-1,1)\rightarrow E$ such that $a(x)=x$ if $x\neq 0$ and $a(0)=q$ and $b(x)=x$ for all $x$. Even if $a$ and $b$ are two different sections, from the perspective of $0$  the sections are either different or equal. In other words on the node $0$ we have:
\[ \nVdash_0 (a=b\vee \neg(a=b)).\]
Indeed it is clear that $\nVdash_0 a=b$ (i.e  $\Vdash_0 a=b$ does not hold) because  $a(0)\neq b(0)$ (see 1 in the definition \ref{logic}). On the other hand  $\nVdash_0 \neg(a=b)$ because each neighbourhood of $0$ contains points $x$ such that $a(x)=b(x)$. However, note that if $x\in (-1,1)\setminus \{0\}$, $\Vdash_x (a=b)$.  The non validity of the excluded middle in $0$ depends on the fact that the Hausdorff  property is not valid on the fiber over $0$. It can be proved for an arbitrary sheaf the next result:
\[\Vdash_x \forall z,y(z=y\vee\neg(z=y))\Leftrightarrow \exists U\in \mathcal{V}(x) \text{ s.t. } p^{-1}(U) \text{ is a Hausdorff space }.\] 
Thus, this example shows how the logic of the sheaf of structures is related with the geometric properties of the sheaf.\\

In General Relativity, singularities in  a  spacetime $M$ can be added as boundary points in an extended manifold $M^{+}=M\cup\partial M$ in such a way that the differential structure of the extended manifold captures the structure of the original manifold in some way. In some of these boundary constructions as those developed by Geroch and Schmidt (see \cite{hawking}  p.217 . p.276 respectively) some points, where the Hausdorff property does not hold, appear as a result of the construction of the boundary. Furthermore this situation, in the Schmidt construction, occurs  in the  Schwarzschild and Friedman solutions \cite{johnson}, which are our paradigms of strong curvature singularities. These non-Hausdorff features have been seen as defects of these boundary constructions, but within the context of the example exposed above, probably what these results are telling us is that the singularities are a result of the breakdown of the logic which governed these models. The breakdown of the Hausdorff property is probably the breakdown of the excluded middle in the singularity point in some appropriate sheaf description of spacetime. The difficulty to construct a satisfactory boundary definition such that the extended manifold conserve all the desirable features of a classical spacetime,  is probably due to the fact that this formalism is inadequate to capture these situations. The breakdown of the Hausdorff property is a sign of the breakdown of any possibility of classical description, non- Hausdorff spacetimes affect the formulation of the Cauchy problem,  in non-Hausdorff spacetime we can find bifurcating geodesics and classical determinism can loose all its sense. A new conception of manifold is needed, but within this new conception, local determinism will probably have a new meaning based on the ontology of QM, this is why a theory of Quantum Gravity  cannot be constructed before settling what Quantum Reality is.\\

\section{Quantum Set Theory}

Now we want to construct a mathematical universe which is founded over the logic intrinsic to QM. As we saw above the logic of a sheaf of structures  with base space $X$ can be seen as a multivalued logic which take its values over the algebra of open subsets of $X$. Therefore, if we construct sheaves  over a topological space which algebra of open sets  capture the logic of QM,  we will find automatically models ruled by this logic.\\
\indent The mathematical models that  have been used so far in physics are founded over classical logic and classical set theory. If we want an alternative mathematical universe founded over the logic of QM, it will be important to find one as a natural generalization of the classical universe. Thus it will be important for this model  also to be a kind of model of set theory where all the classical constructions find a natural counterpart.\\
The classical mathematical universe is the Von Neumann Hierarchy, which is constructed inductively in the next way. Let
 \[ V_0=\emptyset\]
\[V_{\alpha+1 }=\mathcal{P}(V_\alpha)\]
\[V_\lambda=\bigcup_{\alpha<\lambda}V_{\alpha} \text{ If }\lambda\text{ is a limit ordinal,}\]
where $\mathcal{P}(V_{\alpha})$ is the power set of $V_{\alpha}$ (i.e the set of all the subsets of $V_{\alpha}$). The set 
\[ \mathbb{V}=\bigcup_{\alpha\in On} V_{\alpha},\]
where $On$ is the class of all ordinals, is the Von Neumann Hierarchy of classical sets. Probably the notion of ordinal can be new for someone working in physics, but it is not fundamental to understand the model, just think ordinals as a generalization of natural numbers which cover all possible infinities. $\mathbb{V}$ with the classical belonging relation, $\in$, between sets,  is the universe of classical mathematics,  a model (in the sense of classical logic) of the axioms of ZFC (Zermelo-Fraenkel+Choice). We will construct an analogous model which extend the notion of set, as a variable object  over a sheaf, where the $\in$ relation recover a new contextual meaning.

\subsection{The Cumulative Hierarchy of Variable Sets} 

To introduce the definition of the cumulative hierarchy  we can use the interpretation that categorists use to introduce  the objects of $SET^{\mathbb{P}}$, the advantage here is that the motivation translates literally in the definition; we obtain a truly structure of variable of sets, where using the classical  belonging relation between sets it is possible to define an analogous  belonging relation between variable sets and not as arrows between objects. The definition I  present below  was  originally introduced in \cite{caicedo} and I used  it in \cite{benavides},\cite{benavides3} to get a new proof of the independence of the Continuum Hypothesis. In those articles you can find a more detailed description, here I present a brief introduction enough to follow the main ideas contained below.\\

Using the comprehension axiom of set theory, given a proposition $\varphi(x)$, for any set $A$ we can construct a set $B$ such that $x\in B$ if and only if $x\in A$ and $\varphi(x)$ is ``truth" for $x$ or in other words,
\[B=\{x\in A: \varphi(x)\}.\]
As we have seen  in the sheaves of structures the notion of truth is not absolute but it is based on a contextual truth paradigm. To see how this notion of truth translates when we talk about sets lets consider a particular example. Let $\varphi(x)$ be the next proposition \textit{`` x is an even number greater or equal than 4 and x can be written as the sum of two prime numbers"} . In this moment at my ubication on spacetime $p:=$\textit{``between the 21:00 and the 22:00, of the 5 March 2011, in some place in London"} I cannot assure that $\{z\in\mathbb{N}:\varphi(x)\}=\{x\in \mathbb{N}: x\geq 4 \wedge (x$ is even $)\}$, i.e. that the Goldbach Conjecture is true. However $\varphi$ can be used to define a set at the node $p$,
\[\varphi(p)=\{x:\varphi(x) \text{ holds at the node } p.\}\] 
The Goldbach conjecture has been verified for a huge number of even numbers, and this set of numbers will keep growing each time we will find a new prime number. Probably (maybe not) some day the Goldbach Conjecture will be proved or disproved, but the important feature that follow from this example is that instead of conceiving sets as absolute entities, we can conceive them as variable structures which variate over our \textit{Library of the states of knowledge}. It is natural then to conceive the set of nodes where our states of Knowledge variates as nodes in a partial order or points in a topological space, that can represent, for instance,  the causal structure of spacetime. Our "states of Knowledge" will be then structures that represent the sets as we see them in our nodes. Therefore, from each node we will see arise a cumulative Hierarchy of variable sets, which structure will be conditioned by the perception of the variable structures in the other nodes that relate to it. Or more precisely.
  
\begin{defin}\label{hierarchy}
Let $X$ be an arbitrary topological space, the cumulative hierarchy of variable sets over $X$ is defined in the next way. Given $U\in Op(X)$\footnote{ $Op(X)$ denote the non empty open sets of $X$} we define inductively:
\begin{align*}
V_0(U)=& \emptyset\\
V_{\alpha+1}(U)=&\{ f:Op(U)\rightarrow \bigcup_{W\subseteq U}\mathcal{P}(V_{\alpha}(W)): 1. \text{ If }W\subseteq U \text{ then } f(W)\subseteq V_{\alpha}(W),\\
& 2. \text{ If }V\subseteq W\subseteq U,\text{ then for all }g\in f(W),\quad g\upharpoonright_{Op(V)}\in f(V),\\
& 3. \text{ Given }\{U_i\}_i \text{ an open cover of } U \text{ and } g_i\in f(U_i)\\
& \text{such that }  g_i\upharpoonright_{op(U_i\cap U_j)}=g_j\upharpoonright_{op(U_i\cap U_j)} \text{ for any }i,j,\\
& \text{there exists }g\in f(U) \text{ such that }g\upharpoonright_{op(U_i)}=g_i\text{ for all } i\}\\
V_{\lambda}(U)=&\bigcup_{\alpha<\lambda}V_{\alpha}(U) \text{ if } \lambda\text{ is a limit ordinal,}\\
V(U)=&\bigcup_{\alpha\in On}V_{\alpha}(U).
\end{align*}
The valuation $V$ over the open sets constitute an exact  presheaf of structures, which Sheaf of Germs $\mathbb{V}^{X}$  constitute the cumulative hierarchy of variable sets. 
\end{defin}  

For each $U\in Op(X)$ the set $V(U)$ is a set of functions defined over $Op(U)$ which values for $W\in Op(U)$ are functions over $Op(W)$ which values for $V\in Op(W)$ are functions over $Op(V)$ and so on.\\
\indent  The next step is to define the $\in$ relation. In a natural way we define:

\[\Vdash_U f\in g \Leftrightarrow f\in g(U),\]

which means that respect to the context $U$,  $f$ belongs to $g$ if and only if $f\in g(U)$ as classical sets\footnote{Compare this definition with the definition of the $\in$ relation on Boolean valued models}. At first sight this construction seems to be odd but in the results presented below we will see that the constructions are extremely natural and extremely powerful. The first result is that, as we expect, this model is a natural generalization of the universe of classical sets, a model where the axioms of Set Theory are satisfied respect to the logic of  sheaves of structures \cite{villa}\cite{benavides}\cite{benavides3}.

\begin{teor}
For any topological space $X$,
\[\mathbb{V}^{X}\Vdash ZF.\]
In other words for any $x\in X$ (or  $U\in Op(X)$) , at the node $x$ (at the context $U$) the axioms of set theory are satisfied i.e $\Vdash_x ZF$ ( $\Vdash_U ZF$).
\end{teor}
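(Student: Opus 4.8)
The plan is to verify each axiom of ZF by working in the Kripke--Joyal semantics of the sheaf $\mathbb{V}^X$, exploiting the fact (established earlier) that $V$ is an exact presheaf so that one may compute $\Vdash_U$ directly on the presheaf structures $V(U)$ via Theorem~\ref{kripkejoyal}. The strategy for almost every axiom is the same: read the axiom as a first-order formula in the language $\{\in\}$, unwind the forcing clauses for $\forall,\exists,\rightarrow,\vee$ over an arbitrary context $U$, and then exhibit the required ``witness'' $f \in V_\alpha(W)$ for an appropriate open $W$ (or open cover of $W$) by performing the analogous classical construction fibrewise and checking the three coherence conditions (restriction-stability, local character, and gluing) in Definition~\ref{hierarchy}. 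Since $\mathbb{V}^X$ is a sheaf of $\{\in\}$-structures, it suffices by Corollary~\ref{localtruth} to force each axiom at every node, and by the remark that intuitionistically valid formulas are forced everywhere, one only needs genuine work for the axioms with existential content.

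First I would dispatch \emph{Extensionality}: unwinding $\Vdash_U \forall z(z\in f \leftrightarrow z\in g)$ using clauses 5 and 7 of Theorem~\ref{kripkejoyal} shows it says $f(W)=g(W)$ for all $W\subseteq U$, i.e.\ $f\!\upharpoonright_{Op(U)} = g\!\upharpoonright_{Op(U)}$, which is exactly $\Vdash_U f=g$. Then \emph{Pairing} and \emph{Union}: given $f,g \in V_\alpha(U)$ define $h \in V_{\alpha+1}(U)$ by $h(W) = \{f\!\upharpoonright_{Op(W)}, g\!\upharpoonright_{Op(W)}\}$ and check conditions 1--3 hold because restriction is functorial and the two-element family glues trivially; union is handled analogously by setting $h(W) = \bigcup\{k(W) : k \in f(W)\}$ and again verifying coherence. \emph{Power set} requires forming, for $f \in V_\alpha(U)$, the function $P(W) = \{\,g \in V_{\alpha+1}(W) : g(V)\subseteq f(V) \text{ for all } V\subseteq W\,\}$ and arguing this lands in some $V_\beta(U)$ and satisfies the exactness conditions; one must then check $\Vdash_U \forall z(z\subseteq f \rightarrow z \in P)$, where the subtlety is that $\subseteq$ is interpreted locally, so a subset ``from the point of view of $U$'' of $f$ is precisely a coherent choice of $g(V)\subseteq f(V)$. \emph{Infinity} follows by transporting the canonical names $\check{\omega}$ (the constant-sheaf image of the classical $\omega$) into the hierarchy and checking it is forced to be inductive. \emph{Separation} is where the contextual-truth paradigm does real work: given $f \in V_\alpha(U)$ and a formula $\varphi$, one sets $s(W) = \{\,g \in f(W) : \mathbb{V}^X \Vdash_W \varphi(g, \ldots)\,\}$; condition~2 for $s$ holds because forcing is monotone under restriction (Corollary~\ref{localtruth}), and condition~3 (gluing) holds because $\varphi$ forced on each piece of a cover is forced on the union. \emph{Replacement} and \emph{Foundation} are handled similarly, Foundation using the ordinal rank $\alpha$ of the hierarchy as a well-founded measure and an $\in$-minimality argument carried out node-wise.

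The main obstacle, and the step I would treat most carefully, is the gluing/exactness verification inside \emph{Separation} and \emph{Power set}: one must confirm that the fibrewise-defined witness $f$ actually satisfies condition~3 of Definition~\ref{hierarchy}, i.e.\ that when local data agree on overlaps the glued function is again an element of the hierarchy and genuinely realizes the existential quantifier in the Kripke--Joyal sense (clause~6 of Theorem~\ref{kripkejoyal}, which only demands an open cover together with local witnesses, not a single global one). This is exactly the point where the intuitionistic, sheaf-theoretic character of the model departs from the classical construction, and where the hypothesis that $V$ is an \emph{exact} presheaf is indispensable. The remaining bookkeeping---tracking ranks to ensure every constructed witness lies in some $V_\alpha(U)$, and checking restriction-stability---is routine given the functoriality built into the definition of the hierarchy, so I would state it and refer to \cite{caicedo}, \cite{benavides}, \cite{benavides3} for the full details rather than grinding through all clauses.
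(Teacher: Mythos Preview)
Your proposal is correct and matches the paper's approach: the paper does not give a self-contained proof of this theorem but cites \cite{villa}, \cite{benavides}, \cite{benavides3} and then illustrates the method by proving only the Comprehension (Separation) axiom as a lemma, using precisely your construction $y(W)=\{x\in z(W):\ \Vdash_W\varphi(x)\}$ and verifying conditions 1--3 of Definition~\ref{hierarchy}. Your sketch is in fact more detailed than what the paper itself provides, but the one axiom the paper does work out coincides with your treatment of Separation, and your closing remark to defer the remaining verifications to the same references is exactly what the paper does.
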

 
To see what this means lets see how the axiom of comprehension cited above is valid in this more general context.\\

\textbf{Axiom of Comprehension:}\textit{ Let $\varphi(x,z,w_1,...,w_n)$ be a formula. For every set $z$ there exists a set $y$ such that, $x\in y$ if and only if $x\in z$ and $\varphi(x)$ holds for $x$ }

\begin{lema} Let $X$ be a topological space, in $\mathbb{V}^{X}$ for any $U\in Op(X)$,
\[\Vdash_U \forall z\forall w_1,...,w_n \exists y \forall x(x\in y \leftrightarrow x\in z \wedge \varphi)\]
\end{lema}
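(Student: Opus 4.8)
The plan is to construct explicitly, for each $U\in Op(X)$, the variable set $y$ witnessing the existential quantifier, and then verify the biconditional node-by-node using the Kripke-Joyal clauses of Theorem \ref{kripkejoyal}. Since the whole formula begins with $\forall z\forall w_1,\dots,w_n$, by clause 7 of the Kripke-Joyal semantics it suffices to fix an arbitrary open $W\subseteq U$ and arbitrary sections $z,w_1,\dots,w_n\in V(W)$ defined on $W$, and to produce $y\in V(W)$ with $\Vdash_W \forall x(x\in y\leftrightarrow x\in z\wedge\varphi)$. The natural candidate, mimicking the classical comprehension term, is the function $y$ defined on $Op(W)$ by
\[
y(V)=\{\, f\in z(V) : \mathfrak{V}^X\Vdash_V f\in z\wedge\varphi(f,z,w_1,\dots,w_n)\,\}
=\{\, f\in z(V) : \mathfrak{V}^X\Vdash_V \varphi(f,z\restriction_V,\dots)\,\}
\]
for $V\in Op(W)$, where I have used that $\Vdash_V f\in z \Leftrightarrow f\in z(V)$ so the conjunct $x\in z$ is automatically recorded in the restriction to $z(V)$. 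One then checks $y\in V_{\alpha+1}(W)$ where $\alpha$ is large enough that $z\in V_\alpha(W)$: condition 1 ($y(V)\subseteq V_\alpha(V)$) follows since $y(V)\subseteq z(V)\subseteq V_\alpha(V)$; condition 2 (restriction-stability) follows from Corollary \ref{localtruth}, i.e. forcing at a node is equivalent to forcing on a neighbourhood, so that $\Vdash_V\varphi$ implies $\Vdash_{V'}\varphi$ for $V'\subseteq V$, combined with the restriction-stability already known for $z$; condition 3 (the gluing/exactness condition) follows from the same local-character property together with exactness of the presheaf $V$ from Definition \ref{hierarchy} applied to $z$.

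With $y$ in hand, the remaining task is to verify $\Vdash_W \forall x(x\in y\leftrightarrow x\in z\wedge\varphi)$. By Kripke-Joyal clause 7 this reduces to showing, for every open $V\subseteq W$ and every section $\mu\in V(V)$, that $\Vdash_V (\mu\in y\leftrightarrow \mu\in z\wedge\varphi(\mu,\dots))$. Unwinding the biconditional as a conjunction of two implications and using clause 5, it suffices to show for every open $V'\subseteq V$ that $\Vdash_{V'}\mu\in y$ holds if and only if $\Vdash_{V'}\mu\in z\wedge\varphi$ holds. Now $\Vdash_{V'}\mu\in y$ means $\mu\restriction_{V'}\in y(V')$, which by the definition of $y$ means $\mu\restriction_{V'}\in z(V')$ and $\Vdash_{V'}\varphi(\mu,\dots)$ — that is, precisely $\Vdash_{V'}\mu\in z$ and $\Vdash_{V'}\varphi$, which by clause 2 is $\Vdash_{V'}\mu\in z\wedge\varphi$. (A small point: one must make sure $\mu\restriction_{V'}\in z(V')$ is genuinely equivalent to $\mathfrak{V}^X\Vdash_{V'}\mu\in z$, but this is exactly the definition of the $\in$ relation given just before the statement; and the substitution of $\mu\restriction_{V'}$ for $\mu$ is harmless since forcing depends only on the germ, again by Corollary \ref{localtruth}.)

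The main obstacle I anticipate is not the logical unwinding but checking that the candidate $y$ actually lies in the hierarchy $V(W)$ — specifically verifying the exactness condition 3 in Definition \ref{hierarchy}. This is where one genuinely needs that the truth value $[[\varphi(f,\dots)]]_{V}$ is an open set and behaves well under covers: if $\{V_i\}$ covers $V$ and $g_i\in y(V_i)$ agree on overlaps, then each $g_i\in z(V_i)$ with $\Vdash_{V_i}\varphi(g_i,\dots)$, so by exactness of $z$ there is a unique $g\in z(V)$ restricting to the $g_i$, and one must argue $\Vdash_{V}\varphi(g,\dots)$ — which follows because $[[\varphi(g,\dots)]]_V\supseteq\bigcup_i V_i=V$ by the local character of forcing (Corollary \ref{localtruth}). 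Once this bookkeeping is done, the statement follows; no delicate set-theoretic ranking beyond "choose $\alpha$ with $z\in V_\alpha(W)$" is required, since $y$ is carved out of $z$.
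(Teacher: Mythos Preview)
Your proof is correct and follows essentially the same route as the paper: both define the witness $y$ by $y(V)=\{x\in z(V):\ \Vdash_V\varphi(x)\}$ using classical comprehension in the ambient universe, and then verify conditions 1--3 of Definition~\ref{hierarchy}. You are in fact more careful than the paper on two points: you spell out the gluing condition~3 explicitly (the paper just says ``Property 3 in the definition follows in analogous way''), and you unwind the Kripke--Joyal clauses to check the biconditional $\forall x(x\in y\leftrightarrow x\in z\wedge\varphi)$, whereas the paper leaves this verification to the reader.
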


\begin{proof}
Following the logic of  sheaf of structures, contained in theorem  \ref{kripkejoyal}, to prove the above result means to prove that  for all $W\subseteq U$ and $z,w_1,...,w_n\in V(W)$ there exists an open cover $\{W_i\}_i$ of $W$ and $y_i\in V(W_i)$ such that for any $T_i\subseteq W_i$ and $x\in V(T_i)$ for $Y_i\subseteq T_i$, $x\in y_i(Y_i)$ if and only if\footnote{To be rigorous in this last expression we have to write $x\upharpoonright_{Y_i}\in y_i\upharpoonright_{Y_i}(Y_i)$ however we will omit these technicalities otherwise the notation will become very confusing.}  $x\in z(Y_i)$ and $\Vdash_{Y_i}\varphi(x)$. It may seem that proving a result in this new logic is very complicated but as we will see, this and the next results have elementary proofs. 
Let $z, w_1,...,w_n \in V(U)$ and 
\[y:Op(U)\rightarrow  \bigcup_{W\subseteq U} V(W)\]
\[y(W)=\{x\in z(W): \Vdash_W \varphi(x)\}.\]
Note that we construct $y(W)$ using the axiom comprehension in the classical universe. It is clear that the variable set $y$ just defined satisfies the result described above taking the appropriate restrictions on the respective sets there cited. Note for example that as an open cover of $W\subseteq U$ we can take the cover formed  by the set $W$, in other results below we will need the definition of the existential using a non trivial open cover but here it is not the case. However we need to prove that the $y$ just defined is an element of the structure of variable sets i.e we need to verify properties 1-3 in definition \ref{hierarchy}. Let  $\alpha$ the minimum ordinal such that $z\in V_{\alpha+1}(U)$. Since $y(W)\subset z(W)$ then $y(W)\subset V_{\alpha}(W)$, therefore 
\[y:Op(U)\rightarrow \bigcup_{W\subseteq U} P(V_{\alpha}(W)).\]
Given $T\subseteq W\subseteq U$ and $g\in y(W)$ we have $g\in z(W)$ and $\Vdash_W \varphi(g)$. Thus $g\upharpoonright_{Op(T)}\in z(T)$ and $\Vdash_T \varphi(g\upharpoonright_{Op(T)})$, then $g\upharpoonright_{Op(T)}\in y(T)$.  Property 3 in the definition follows in analogous way.  Therefore $y\in V(U)$. 
\end{proof}

Even if generally $V(U)$ and the classical Von Neumann hierarchy $\mathbb{V}$ are not isomorphic for any open set $U$ there is an standard way to embed $\mathbb{V}$ in $V(U)$. Given an arbitrary set $a\in \mathbb{V}$ let 

\[\widehat{a}(U):Op(U)\rightarrow\bigcup_{W\subseteq U}V(W)\]
\[\widehat{a}(U)(W)=\{\widehat{b}(U)\upharpoonright_{Op(W)}:b\in
a\}.\]

we have that  $\widehat{a}(U)\upharpoonright_{Op(W)}=\widehat{a}(W)$ and it can be proved the next result \cite{benavides3}.

\begin{teor} Let $U\subseteq X$ an open subset, then 
 \begin{align*}
 \Psi:\mathbb{V}&\rightarrow V(U)\\
a &\mapsto \widehat{a}(U)
\end{align*}
is a monomorphism. 
\end{teor}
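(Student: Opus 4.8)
The plan is to split the statement into three parts --- that $\widehat{a}(U)$ is genuinely an element of $V(U)$, that $\Psi$ is injective, and that $\Psi$ both preserves and reflects membership (read in $V(U)$ as $\Vdash_U \cdot\in\cdot$, i.e. $f\in g(U)$) --- and to prove injectivity first, since the well-definedness argument uses it. For injectivity I would establish the sharper claim that for every non-empty open $W\subseteq X$ and all sets $a,b$ one has $\widehat{a}(W)=\widehat{b}(W)\Rightarrow a=b$, by $\in$-induction on $a$ (equivalently, induction on the rank of $a$), leaving $b$ and $W$ universally quantified. Granting the claim for every $a'\in a$ and evaluating the hypothesis at the context $W$ itself, one gets
\[\{\widehat{a'}(W):a'\in a\}=\widehat{a}(W)(W)=\widehat{b}(W)(W)=\{\widehat{b'}(W):b'\in b\};\]
for $a'\in a$ pick $b'\in b$ with $\widehat{a'}(W)=\widehat{b'}(W)$, so the induction hypothesis applied at $a'$ forces $a'=b'\in b$, giving $a\subseteq b$; the symmetric choice gives $b\subseteq a$, and extensionality in $\mathbb{V}$ yields $a=b$. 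In particular $\Psi$ is injective.

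I would then check, by a parallel $\in$-induction, that $\widehat{a}(U)\in V_{\rho(a)+1}(U)$, where $\rho$ denotes von Neumann rank. Using the already-noted identity $\widehat{a}(U)\upharpoonright_{Op(W)}=\widehat{a}(W)$ and the formula $\widehat{a}(U)(W)=\{\widehat{a'}(W):a'\in a\}$, the first two defining clauses of Definition~\ref{hierarchy} are immediate: clause~1 holds because $\widehat{a'}(W)\in V_{\rho(a')+1}(W)\subseteq V_{\rho(a)}(W)$ for $a'\in a$, and clause~2 because the restriction of $\widehat{a'}(W)$ to $Op(V)$ is just $\widehat{a'}(V)$, which lies in $\widehat{a}(U)(V)$. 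Preservation and reflection of membership are then read off directly: if $a\in b$ then $\widehat{a}(U)\in\widehat{b}(U)(U)$, i.e. $\Vdash_U\widehat{a}(U)\in\widehat{b}(U)$; conversely $\widehat{a}(U)\in\widehat{b}(U)(U)$ means $\widehat{a}(U)=\widehat{b'}(U)$ for some $b'\in b$, whence $a=b'\in b$ by the injectivity step. Assembling these facts shows $\Psi$ is a monomorphism.

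The part I expect to be the real obstacle is clause~3 of Definition~\ref{hierarchy} (exactness/gluing) for $\widehat{a}(U)$. Given an open cover $\{U_i\}_i$ of $U$ and elements $g_i=\widehat{a_i}(U_i)\in\widehat{a}(U)(U_i)$ agreeing on overlaps, the injectivity claim applied on each non-empty $U_i\cap U_j$ forces $a_i=a_j$ there, so $i\mapsto a_i$ is locally constant along the nerve of the cover, and one must still produce a single $g\in\widehat{a}(U)(U)$ restricting to each $g_i$. For connected $U$ the nerve is connected and $g=\widehat{a_{i_0}}(U)$ does the job; in general one argues component by component, or passes to the sheaf of germs $\mathbb{V}^{X}$, where gluing is automatic. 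This is where the topology of $U$ and the contextual, non-absolute nature of the construction genuinely come into play, and it is the step that demands the most care.
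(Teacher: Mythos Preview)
The paper does not prove this theorem in the text; it simply cites \cite{benavides3} and moves on, so there is no in-paper argument to compare yours against.

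Your $\in$-induction for injectivity and your preservation/reflection of membership are correct and standard. Your instinct about clause~3 of Definition~\ref{hierarchy} is not excess caution but a genuine obstruction: with $\widehat{a}(U)(W)=\{\widehat{b}(W):b\in a\}$ read literally, gluing fails whenever $U$ splits as $U_1\cup U_2$ with $U_1\cap U_2=\emptyset$, both pieces non-empty, and $|a|\geq 2$. Taking distinct $b,c\in a$, the family $g_1=\widehat{b}(U_1)$, $g_2=\widehat{c}(U_2)$ is vacuously compatible (the paper's $Op(\cdot)$ excludes $\emptyset$, so restriction to $Op(U_1\cap U_2)=Op(\emptyset)$ is the empty function on both sides) yet has no amalgam in $\widehat{a}(U)(U)=\{\widehat{b}(U),\widehat{c}(U)\}$. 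Since the base spaces the paper actually cares about, $X=S_{\mathcal U}$ with the projection topology, are totally disconnected, the connectedness escape you mention is unavailable there. The resolution you also point to---passing to the sheaf of germs $\mathbb{V}^{X}$, or equivalently sheafifying $\widehat{a}$ so that $\widehat{a}(U)(W)$ contains all locally-constant choices from $a$ rather than only the globally constant ones---is the reading under which the statement becomes true, and under that reading your argument is complete.
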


Using this embedding it can be easily proved that the set forced as the empty set on each open set $U$ is precisely $\widehat{\emptyset}(U)$ or in other words it holds
\[\Vdash_U \forall y (y\notin \widehat{\emptyset}(U)).\]
 In the same way it can be proved that the set  forced as the set of natural numbers (i.e. the minimum inductive set that contains the empty set) is precisely $\widehat{\mathbb{N}}(U)$ \cite{benavides3}. The sets forced as the integers and the rational numbers are also $\widehat{\mathbb{Z}}(U)$ and $\widehat{\mathbb{Q}}(U)$ respectively.  However the set  forced as the real numbers generally not coincide with $\widehat{\mathbb{R}}(U)$, for example as we will see below,  this is the case of the set that is forced as the real numbers in the case related to a base space which captures the essence of quantum logic.\\ 

Now we have a way to construct new mathematical universes over arbitrary topological spaces. We need then a topological space able to capture the essence of Quantum Logic to obtain a mathematical Quantum Universe able to capture Quantum Reality.

\subsection{The Cumulative Hierarchy of Quantum Variable Sets}\label{QVS}

As we saw above one  important property of the Logic of sheaves of structures is the possibility to define a contextual notion of truth. This contextual truth paradigm is not just intrinsic to the notion of truth in classical physical reality but within the quantum realm it can recover new fundamental meanings.\\
\indent  One first aspect of contextuality arises from the Kochen-Specker theorem \footnote{See \cite{isham4} chapter 9  for a nice discussion about the Kochen-Specker theorem.}. In a few words what the Kochen-Specker theorem says is that in QM does not  exist a valuation function from the set of physical observables, intended as self adjoint operators on a Hilbert space $H$, if the dimension of $H$ is greater than two. A valuation is a function $\lambda$ from  the set of self-adjoint operators $B_{sa}(H)$ on $H$ to the real numbers, $\lambda:B_{sa}(H)\rightarrow \mathbb{R}$, which satisfies:
\[1.\lambda(A) \text{ belongs to the spectrum of }A\in B_{sa}(H)\]
\[2.\lambda(B)=f(\lambda(A))\text{ whenever } B=f(A)\]
with $f:\mathbb{R}\rightarrow \mathbb{R}$  a Borel function.\\
\indent On the other hand, locally or in a contextual sense, there exist valuations. If we consider an abelian Von Neumann subalgebra  $\mathcal{U}$ of the algebra of operators,  the elements of the Gelfand spectrum of $\mathcal{U}$ (i.e the positive linear functions $\sigma:\mathcal{U}\rightarrow\mathbb{C}$ of norm 1 such that $\sigma(AB)=\sigma(A)\sigma(B)$ for all $A,B \in \mathcal{U}$) when restricted to the self-adjoint operators in $\mathcal{U}$ constitute a set of valuations which satisfy 1 and 2 above (see \cite{kadison}) . What the Kochen-Specker theorem is telling us is that none of this local valuations can be extended globally.\\
\indent A history is characterized by the values that physical variables take on it; therefore, we can see these valuations as histories, the Kochen-Specker theorem is telling us that the space of histories has a non trivial contextual structure. Thus, a first notion of context is given  by the histories associated to an abelian Von Neumann subalgebra of the algebra of operators of a Hilbert space.  As the elements of the Gelfand spectrum of an algebra can be extended in different ways in two non compatible extensions of the original algebra, this contextual character is telling us that what is perceived as true by a history is conditioned by the context of similar histories, where similar is intended in this case as histories that can be expressed in function of the same observables.\\
 
A second notion of contextuality arises from the phenomenon of interference in Quantum Mechanics. Interference as intended in the Deutsch-Everett multiversal interpretation\footnote{See \cite{deutsch} chapter 2, \cite{deutsch4} chapter 11 or \cite{deutsch2} for a non technical explanation of the Quantum Multiverse and \cite{deutsch3} for a more technical approach. } of Quantum Mechanics is the way as two histories can affect each other. Interference phenomenons are strong enough to be detected only between universes or histories that are very similar, thus what is perceived by a history depends on the context of histories close to it, where close is intended in the sense of being similar. To describe this second contextual feature we use projections operators on a Hilbert space $H$. Let $\mathcal{U}$ be an abelian Von Neumann algebra of operators, $S_\mathcal{U}$ the Gelfand spectrum  of $\mathcal{U}$ and $P(\mathcal{U})$ the set of projections of $\mathcal{U}$. Given $P'\in P(\mathcal{U})$ and $\sigma\in S_\mathcal{U}$ we have
\[\sigma(P')=\sigma(P'^{2})=\sigma(P')\sigma(P')\]
then $\sigma(P')\in \{0,1\}$. Consider a self-adjoint operator $A\in \mathcal{U}$ such that 
\[A=\sum_{n=1}^{N}a_n P'_n\]
is the spectral representation of $A$ in $\mathcal{U}$. Let  $\lambda\in S_\mathcal{U}$  such that $\lambda(A)=a_m$, then $\lambda(P'_m)=1$. Thus if $A$ represents a physical observable, we have that in all the histories $\lambda$  such that $\lambda(P'_m)=1$ the physical observable $A$ assumes the value $a_m$. Therefore, given a proposition $P'\in P(\mathcal{U})$ the set 
\[P=\{ \lambda \in S_\mathcal{U}: \lambda(P')=1\}\] 
is a context of histories which are similar in the sense that some physical observables assume the same values or the values satisfy the same inequalities in each history.\\

The first notion of contextuality has played an important role in the development of the Consistent Histories program (see \cite{griffiths}) and also it is the base of the approach based  on topos theory developed by Isham and others (see \cite{isham} and \cite{doering}). The importance of the second notion has not been  considered deeply but, as we will see below, this notion is fundamental and probably will explain how to relate Classical and Quantum realities.  I conjecture that these two notions of contextuality capture the essence of the logic of classical Quantum Mechanics. Therefore, to develop a formalism  able to capture the essence of Quantum Reality  using the tools developed above, we need a base space $X$ which structure captures these two notions of contextuality. Unfortunately to capture both notions we will probably need something more general than a topological space and the respective generalization of the tools developed above, and it is here that tools of topos theory will be probably needed, this will be developed in  a future work. However,  we can limit ourselves to a context $X=S_\mathcal{U}$ where $\mathcal{U}$ is an abelian Von Neumann algebra, and on $X$ we can consider the topology which open sets are the sets $P$ associated to a projector operator $P'\in P(\mathcal{U})$ as defined above. This will be enough to capture the importance of the second notion of contextuality. The topology $\{P\}_{P'\in P(\mathcal{U})}$ is a topology where all the open sets are clopen and where the algebra of  open sets is a boolean algebra isomorphic to the boolean algebra of projector operators $P(\mathcal{U})$. The Cumulative hierarchy of variable sets constructed over the topological space $\langle X=S_\mathcal{U}, \{P\}_{P'\in P(\mathcal{U})}\rangle$ where $\mathcal{U}$ is an abelian Von Neumann Algebra is what we will call  \textit{The Cumulative Hierarchy of Quantum Variable Sets.} The objects of this model will be the sections of the sheaf $\mathbb{V}^{X}$, which result to be extended objects that variate over the space of histories or universes $X=S_{\mathcal{U}}$, in a few words multiversal objects. This characteristic will probably be ideal to describe quantum particles, to put it in D. Deutsch words:
\begin{center}
\textit{Thanks to the strong internal interference that it is continuously undergoing, a typical electron is an irreducibly multiversal object, and not a collection of parallel-universe or parallel-histories objects. \footnote{ See \cite{deutsch4} page 291.}}
\end{center}

We will see that in this new mathematical universe the tools we use to understand Quantum Mechanics recover new fundamental meanings that can be fundamental to settle a definite  interpretation of QM. Indeed, once we construct the continuum in this mathematical universe we find a fundamental result which shows the interpretative power of this model respect to Quantum theory.

\subsection{The Quantum Continuum}

As in the previous section consider $\mathcal{U}$ an abelian Von Neumann subalgebra of the algebra of operators of a Hilbert space $H$, and $X=S_\mathcal{U}$ the Gelfand spectrum of $\mathcal{U}$ with the topology given by the sets $\{P\}_{P'\in P(\mathcal{U})}$ as defined above. We want to construct the continuum, or in other words, the set which is forced as the real numbers on the hierarchy of variable sets over $X$. To construct this set  we will use the definition of the real numbers given by  Dedekind cuts, therefore at each open (clopen) set  $P$  the object $\mathbb{R}(P)$ will be  the set that is forced as the set of Dedekind cuts i.e 

\[\Vdash_{P} \mathbb{R}(P)=\{(L,U)\in \mathcal{P}(\widehat{\mathbb{Q}}(P))\times \mathcal{P}(\widehat{\mathbb{Q}}(P)): (L,U) \text{ is a Dedekind cut }\};\]

where being a Dedekind cut means:

\begin{enumerate}
\item $\Vdash_{P} \exists q\in \widehat{\mathbb{Q}}(P)(q\in L)\wedge \exists r\in \widehat{\mathbb{Q}}(p)(r\in U).$
\item $\Vdash_{P} \forall q,r\in\widehat{\mathbb{Q}}(P)(q<r\wedge r\in L \rightarrow q\in L).$\\
$\Vdash_{P} \forall q,r\in \widehat{\mathbb{Q}}(P)(r<q\wedge  r\in U\rightarrow q\in U).$
\item  $\Vdash_P q\in\widehat{\mathbb{Q}}(P) (q\in L\rightarrow \exists r\in\widehat{\mathbb{Q}}(P)((r\in L\wedge q<r)).$\\
$\Vdash_P q\in\widehat{\mathbb{Q}}(P) (q\in U\rightarrow \exists r\in\widehat{\mathbb{Q}}(P)((r\in U\wedge q>r)).$
\item $\Vdash_P \forall q,r\in\mathbb{Q}(q<r\rightarrow (q\in L \vee r\in U)).$
\item $\Vdash_P L\cap U=\emptyset $.
\end{enumerate}

Since the set $\widehat{\mathbb{Q}}(P)$ is the embedding of $\mathbb{Q}$ in $V(P)$,  and the construction of the order relation does not introduce anything new, the order relation on $\widehat{\mathbb{Q}}(P)$ is equivalent to the order relation in $\mathbb{Q}$, i.e. 
\[\Vdash_P \widehat{q}(P),\widehat{r}(P)\in\widehat{\mathbb{Q}}(P)\wedge \widehat{q}(P)<\widehat{r}(P)\quad \text{if and only if}\quad q<r \text{ in the classical sense. }\]
Using \ref{kripkejoyal} the above conditions mean:

\begin{enumerate}

\item i) There exist $q\in \widehat{\mathbb{Q}}(P)(P)$ such that $q\in L(P)$ (Analogously for $U$) or\\
ii) There exist an open cover $\{P_i\}_{i\in I}$ of $P$ such that there exists $q\in \widehat{\mathbb{Q}}(P_i)(P_i)$ and $q\in L(P_i)$. (Analogously for $U$)
\item For all the open sets $Q\subseteq P$ if  $T\subseteq Q$ is open, $\widehat{q}(T),\widehat{r}(T)\in \widehat{\mathbb{Q}}(T)(T)$, $q<r$  and $\widehat{r}(T)\in L(T)$ then $\widehat{q}(T)\in L(T)$. (Analogously for $U$).
\item For all the open sets $Q\subseteq P$ if $\widehat{q}(Q)\in \widehat{\mathbb{Q}}(Q)(Q)$ then there exists an open cover $\{Q_i\}$ of $Q$ and  $\widehat{r}(Q_i)\in \widehat{\mathbb{Q}}(Q_i)(Q_i)$ such that $r> q$ and $\widehat{r}(Q_i)\in L(Q_i)$. (Analogously  for $U$).
\item  For all the open sets $Q\subseteq P$,  given $\widehat{q}(Q), \widehat{r}(Q)\in \widehat{\mathbb{Q}}(Q)(Q)$ if $T\subseteq Q$,  $q<r$  then  there exists open sets $T_1$, $T_2$ such that $T=T_1\cup T_2$, $\Vdash_{T_1}\widehat{q}(T_1)\in L$ and $\Vdash_{T_2}\widehat{r}(T_2)\in U$.
\item For all the open sets $Q\subseteq P$ and $\widehat{q}(Q)\in\widehat{\mathbb{Q}}(Q)(Q)$ not both $\widehat{q}(Q)\in L(Q)$ and $\widehat{q}(Q)\in U(Q)$

\end{enumerate}

Given a self adjoint operator $A\in \mathcal{U}$, let  $\{P'_r\}_{r\in\mathbb{R}}$ be the spectral family of operators associated to $A$ (see \cite{kadison}). The family $\{P'_r\}_{r\in\mathbb{R}}$ is a family of operators contained in $P(\mathcal{U})$ which satisfy:
\begin{enumerate}
\item $P'_q\wedge P'_r=P'_{min\{r,s\}}$,
\item $\bigwedge_{r\in\mathbb{R}} P'_r=0$,
\item $\bigvee_{r\in\mathbb{R}} P'_r=1$,
\item $\bigwedge_{q\leq r}P'_r=P'_q$ for every $q\in\mathbb{R}$
\end{enumerate} 

Thus the above spectral family defines a family of clopen subsets, $\{P_r\}_{r\in\mathbb{R}}$ which satisfy  the analogous properties:

\begin{enumerate}
\item $P_q \cap P_r=P_{min\{r,s\}}$,
\item $\bigcap_{r\in\mathbb{R}} P_r=\emptyset$,
\item $\bigcup_{r\in\mathbb{R}} P_r=X$,
\item $\bigcap_{q\leq r}P_r=P_q$ for every $q\in\mathbb{R}$
\end{enumerate} 

Using this family of clopen subsets for $P\in Op(X)$ we define:

\begin{align*}
U_A(P):Op(P)& \rightarrow \bigcup_{Q\subseteq  P}V(Q)\\
Q & \mapsto \{\widehat{q}(Q)\in \widehat{\mathbb{Q}}(Q)(Q): \exists r\in\mathbb{Q}, r<q,  Q\nsubseteq P_r^{c} \}
\end{align*}

\begin{align*}
L_A(P):Op(P)& \rightarrow \bigcup_{Q\subseteq P}V(Q)\\
Q & \mapsto \{\widehat{q}(Q)\in \widehat{\mathbb{Q}}(Q)(Q): Q\subseteq P_q^{c}\}
\end{align*}

where $P^{c}_q$ denotes the complement of $P_q$ in $X$.  We want to show that $U_A$ and $L_A$ define a real number\footnote{We will use the notation $U_A, L_A$ instead of $U_A(P), L_A(P)$, here again we have $U_A(P)\upharpoonright_{Op(Q)}=U_A(Q)$ if $Q\subseteq P$, then we can use this notation without generating confusion}, proving that the conditions  1-5 above  hold:

\begin{enumerate}

\item a-) $\{P\cap P_q^{c}\}_{q\in\mathbb{Q}}$ is an open cover of $P$, indeed,

\begin{align*}
\bigcup_{q\in \mathbb{Q}} (P\cap P^{c}_q)&=P\cap (\bigcup_{q\in\mathbb{Q}} P^{c}_q)\\
&=P\cap (\bigcap_{q\in\mathbb{Q}} P_q)^{c}\\
&=P. 
\end{align*}

We have $P\cap P^{c}_q\subseteq P^{c}_q$ thus $\widehat{q}(P\cap P^{c}_q)\in L_A(W\cap P^{c}_q)$.\\
b-) Since $P\neq \emptyset$, there exists $r\in\mathbb{Q}$ such that $P\nsubseteq P^{c}_ r$, otherwise $P\subseteq \bigcap_{q\in\mathbb{Q}} P^{c}_q=\emptyset$. Therefore, taking $q>r$ by definition we have $\widehat{q}(P)\in U_A(P)$.

\item a-) Let $Q\subseteq P$ be an open set and $\widehat{q}(Q),\widehat{r}(Q)\in \widehat{\mathbb{Q}}(Q)(Q)$. Suppose $r<q$ and $\widehat{q}(Q)\in L_A(Q)$, then $Q\subseteq P^{c}_q$.  Since $P_r\cap P_q=P_r$ we have $P_r\subset P_q$ and $P^{c}_q\subset P^{c}_r$. Thus $Q\subset P^{c}_r$ which implies $\widehat{r}(Q)\in L_A(Q)$.\\
b-) On the other hand if we suppose $r<q$ and $\widehat{r}(Q)\in U_A(Q)$, there exists $s\in\mathbb{Q}$, $s<r$ such that $Q\nsubseteq P^{c}_s$. Since $s<q$ we can conclude that $\widehat{q}(Q)\in U_A(Q)$. 

\item a-) Let $Q\subseteq P$ be an open set and $\widehat{q}(Q)\in\mathbb{Q}(Q)(Q)$ be such that $\widehat{q}(Q)\in L_A(Q)$. Since $Q\subseteq P^{c}_q$,  $\{Q\cap P^{c}_r\}_{r\in\mathbb{Q},r>q}$ is an open cover of $Q$; indeed

\[
\bigcup_{r\in\mathbb{Q},r>q}(Q\cap P^{c}_r)=Q\cap(\bigcup_{r\in\mathbb{Q},r>q}P^{c}_r)=Q\cap (\bigcap_{r\in\mathbb{Q},r>q}P_r)^{c}=Q\cap P^{c}_q=Q.
\]

We have then $Q\cap P^{c}_r\subset P^{c}_r$, which implies $\widehat{r}(Q\cap P^{c}_r)\in L_A(Q\cap P^{c}_r)$.\\
b-) If $\widehat{q}(Q)\in U_A(Q)$ by definition there exists $r<q$ such that $Q\nsubseteq P^{c}_r$. Therefore given  $s\in\mathbb{Q}$, such that $r<s<q$ we have $\widehat{s}(Q)\in U_A(Q)$.

\item  $Q\subseteq P$ be an open set and  $\widehat{q}(Q),\widehat{r}(Q)\in\mathbb{Q}(Q)(Q)$ such that $q<r$. Suppose  $\nVdash_{Q} \widehat{q}(Q)\in L_A$ then $Q\nsubseteq P^{c}_q$. Thus for $r$ exists $q$ such that $q<r$ and $Q\nsubseteq P^{c}_q$, then $\widehat{r}(Q)\in U_A(Q)$. On the other hand if we suppose $\nVdash_Q\widehat{r}(Q)\in U_A$, then for all $s<r$, $Q\subseteq P^{c}_s$. Particularly $Q\subseteq P^{c}_q$ then $\widehat{q}(Q)\in L_A(Q)$.

\item   Let $Q\subset P$ be an open set and $\widehat{q}(Q)\in\widehat{\mathbb{Q}}(Q)(Q)$. Suppose $\widehat{q}(Q)\in L_A(Q)$, then $Q\subseteq P^{c}_q$ and for all $r<q$, $Q\subset P^{c}_r$ thus $\widehat{q}(Q)\notin U_A(Q)$. In the same way if we suppose $\widehat{q}(Q)\in U_A(Q)$, there exists $r<q$ such that $Q\nsubseteq P^{c}_r$, since $P^{c}_q\subseteq P^{c}_r$ we have $Q\nsubseteq P^{c}_q$. Thus $\widehat{q}(Q)\notin L_A(Q)$.
\end{enumerate}
Thus  each spectral resolution of the identity defines a real number in $V(P)$. \\

On the other hand if 
\[U(P):Op(P)\rightarrow \bigcup_{Q\subseteq P} V(Q)\]
\[L(P):Op(P)\rightarrow \bigcup_{Q\subseteq P} V(Q)\]
define a real number in $V(P)$ we can find an spectral resolution of the identity in the next way. Given $r\in\mathbb{Q}$ consider the set 
\[P_r=\bigcup\{P\in Op(X): \forall q>r, \Vdash_P \widehat{q}(P)\in U\}.\] 

\begin{lema}
\[\bigcap_{r\in\mathbb{Q}} P_r=\emptyset\]
\end{lema}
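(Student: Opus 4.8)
The plan is to show that $\bigcap_{r\in\mathbb{Q}} P_r=\emptyset$ by contradiction: suppose the intersection is nonempty, and derive a violation of one of the Dedekind cut conditions for $(L,U)$. First I would unwind the definition of $P_r$. By construction, a point $x\in P_r$ iff there is some open $P\ni x$ with $\Vdash_P \widehat{q}(P)\in U$ for every rational $q>r$; by Corollary \ref{localtruth} (local truth) this is equivalent to saying $x\in [[\widehat{q}\in U]]$ for all rationals $q>r$. So $x\in\bigcap_{r\in\mathbb{Q}}P_r$ would mean that for \emph{every} rational $r$ and every rational $q>r$ — that is, for every rational $q$ whatsoever — we have $x\in[[\widehat{q}\in U]]$. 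In other words, at the node $x$ the upper set $U$ would contain (the germ of) every rational.

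Next I would use a suitable open neighborhood $Q$ of $x$ on which all these memberships hold simultaneously enough to contradict cut condition 1 (the one asserting $L$ is nonempty) together with condition 5 (disjointness $L\cap U=\emptyset$). Concretely: pick $x\in\bigcap_r P_r$. Condition 1 gives, locally around $x$, some rational $q_0$ with $x\in[[\widehat{q_0}\in L]]$, i.e. there is an open $Q\ni x$ with $\Vdash_Q \widehat{q_0}(Q)\in L$. But we also have $x\in[[\widehat{q_0}\in U]]$, so shrinking $Q$ we may assume $\Vdash_Q\widehat{q_0}(Q)\in U$ as well. Then at the node $x$ (indeed on a neighborhood) both $\widehat{q_0}(x)\in L$ and $\widehat{q_0}(x)\in U$, contradicting condition 5, which says $\Vdash_P L\cap U=\emptyset$ and hence $\widehat{q_0}(x)\notin L(x)\cap U(x)$. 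This contradiction shows $\bigcap_{r\in\mathbb{Q}}P_r=\emptyset$.

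The main obstacle, and the point deserving care, is the bookkeeping between the "pointwise at $x$" statements and the "holds on an open set" statements — that is, moving fluidly between $\Vdash_x$ and $\Vdash_Q$ via Corollary \ref{localtruth}, and making sure that when I intersect finitely many neighborhoods (one witnessing $\widehat{q_0}\in L$, one witnessing $\widehat{q_0}\in U$) I still have an open set containing $x$ on which both forcing statements hold, so that the disjointness clause can be applied there. One must also be slightly careful that the rational $q_0$ produced by condition 1 may only appear after passing to an open cover (the "ii)" alternative in the unwound form of condition 1); in that case one replaces $x$ by a point in a cover element and repeats the argument there, which is harmless since it suffices to contradict nonemptiness of the intersection. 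No genuinely hard estimate is involved; the content is purely the interplay of the sheaf-semantic clauses with the cut axioms.
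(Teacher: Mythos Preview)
Your proposal is correct and follows essentially the same route as the paper: assume the intersection is nonempty, observe that every rational is then forced into $U$ there, and derive a contradiction from cut condition~1 ($L$ is inhabited) together with condition~5 ($L\cap U=\emptyset$). The only cosmetic difference is that the paper phrases the argument on the set $R=\bigcap_r P_r$ itself (first proving $\Vdash_P \exists q(q\notin U)$ for every open $P$, then noting $\Vdash_R \widehat q\in U$ for all $q$), whereas you work pointwise at some $x\in R$; your version has the mild advantage of not needing $R$ to be open. One small nit: the ``equivalence'' you state between $x\in P_r$ and $x\in[[\widehat q\in U]]$ for all $q>r$ is only used in the forward direction, which is immediate; the converse would require a single neighborhood witnessing all $q>r$ at once, so it is better to state only the implication you need.
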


\begin{proof}
From (1) in the definition of a Dedekind cut, for any $P\in Op(X)$, $P\neq \emptyset$,  there exists an open cover $\{Q_i\}_{i\in I}$ of $P$ and $\widehat{q_i}(Q_i)\in \widehat{\mathbb{Q}}(Q_i)(Q_i)$ such that $\widehat{q_i}(Q_i)\in L(Q_i)$. On the other hand from (5) in the definition of a Dedekind cut we have that for any $T\subseteq Q_i$, $\nVdash_T \widehat{q_i}(T)\in U$. Summarising we have an open cover $\{Q_i\}_i$ of $P$ and $\widehat{q_i}(Q_i)\in\widehat{Q}(Q_i)(Q_i)$ such that for every open set $T\subseteq Q_i$, $\nVdash \widehat{q_i}(T)\in U$. Therefore we have proved:
\begin{equation}\label{sf1}
\Vdash_P \exists \widehat{q}(P)\in\widehat{\mathbb{Q}}(P)(\widehat{q}(P)\notin U),
\end{equation}
for all $P\in Op(X)$. If 
\[R=\bigcap_{r\in\mathbb{Q}} P_r\neq \emptyset\]
we have 
\[\Vdash_R\widehat{q}(R)\in U\] 
for all $\widehat{q}(R)\in\widehat{\mathbb{Q}}(R)(R)$ but this contradicts \ref{sf1}. 
\end{proof}

\begin{lema}
\[\bigcup_{r\in\mathbb{Q}} P_r=X\]
\end{lema}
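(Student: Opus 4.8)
The plan is to argue dually to the previous lemma, now using clause~(1) of the definition of a Dedekind cut applied to the \emph{upper} set $U$, together with the monotonicity clause~(2). Since $(L,U)$ is forced to be a Dedekind cut over all of $X$, clause~(1) yields in particular $\Vdash_X \exists r\,(r\in\widehat{\mathbb{Q}}(X)\wedge r\in U)$. First I would unwind this with the Kripke--Joyal clause for the existential quantifier (item~6 of Theorem~\ref{kripkejoyal}): it produces an open cover $\{X_i\}_{i\in I}$ of $X$ together with sections $\mu_i$ defined on $X_i$ such that $\Vdash_{X_i}\mu_i\in\widehat{\mathbb{Q}}(X)$ and $\Vdash_{X_i}\mu_i\in U$.

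Next I would identify the witnesses. By the definition of the $\in$ relation and of the canonical embedding $\widehat{(\cdot)}$, the condition $\Vdash_{X_i}\mu_i\in\widehat{\mathbb{Q}}(X)$ means $\mu_i\in\widehat{\mathbb{Q}}(X)(X_i)=\{\widehat{q}(X_i):q\in\mathbb{Q}\}$, so $\mu_i=\widehat{q_i}(X_i)$ for some classical rational $q_i$, and hence $\Vdash_{X_i}\widehat{q_i}(X_i)\in U$. I would then propagate this membership upwards with clause~(2), which gives $\Vdash_X\forall q,r\in\widehat{\mathbb{Q}}(X)\,(r<q\wedge r\in U\rightarrow q\in U)$ and so the same statement restricted to each $X_i$. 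Instantiating, for any open $W\subseteq X_i$ and any rational $q>q_i$ the premises $r<q$ (with $r=\widehat{q_i}(W)$) and $\Vdash_W\widehat{q_i}(W)\in U$ hold, whence $\Vdash_W\widehat{q}(W)\in U$; in particular $\Vdash_{X_i}\widehat{q}(X_i)\in U$ for every rational $q>q_i$. This is precisely the property defining the open sets whose union is $P_{q_i}$, so $X_i\subseteq P_{q_i}$.

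Assembling the pieces, $X=\bigcup_{i\in I}X_i\subseteq\bigcup_{i\in I}P_{q_i}\subseteq\bigcup_{r\in\mathbb{Q}}P_r\subseteq X$, which forces $\bigcup_{r\in\mathbb{Q}}P_r=X$. The one genuinely delicate point will be the identification step: checking that the local existential witness collapses to an embedded classical rational (using the description of $\widehat{\mathbb{Q}}(X)(W)$ coming from the embedding theorem, together with $\widehat{a}(X)\!\upharpoonright_{Op(W)}=\widehat{a}(W)$) and that clause~(2) can be applied uniformly on the whole piece $X_i$ rather than only after a further refinement of the cover; the rest is routine bookkeeping with the semantics of Theorem~\ref{kripkejoyal}.
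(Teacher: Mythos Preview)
Your proof is correct and follows essentially the same route as the paper: both use clause~(1) for the upper set $U$ to obtain local rational witnesses and then clause~(2) to push membership in $U$ upward to all larger rationals, thereby placing each piece of the cover inside some $P_{q_i}$. The only cosmetic difference is that the paper argues pointwise (fixing $\sigma\in X$ and extracting a neighbourhood $Q_i\ni\sigma$ from the cover), whereas you run the Kripke--Joyal existential clause globally on $X$ to get the cover $\{X_i\}$ at once; the two are equivalent, and your bookkeeping around the identification $\mu_i=\widehat{q_i}(X_i)$ is handled correctly.
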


\begin{proof}
Let $\sigma\in X$ and $P$ an open neighbourhood of $\sigma$. From (1) in the definition of a Dedekind cut there exists an open set $Q_i$, and $\widehat{q_i}(Q_i)\in\widehat{\mathbb{Q}}(Q_i)(Q_i)$ such that $\sigma\in Q_i\subseteq P$ and $\Vdash_{Q_i}\widehat{q_i}(Q)\in U$. Then by property (2) of the definition of a Dedekind cut for all $q>q_i$ we have $\Vdash_{Q_i} \widehat{q}(Q_i)\in U$. Therefore $\sigma\in P_{q_i}$, since $\sigma$ was arbitrary we have $\bigcup_{r\in\mathbb{Q}} P_r=X$.
\end{proof}

\begin{lema}
\[\bigcap_{q<r}P_r=P_q\]
\end{lema}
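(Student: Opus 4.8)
The plan is to prove the two inclusions separately, working directly from the definition $P_r=\bigcup\{P\in Op(X): \forall q>r,\ \Vdash_P \widehat{q}(P)\in U\}$. For the inclusion $P_q\subseteq\bigcap_{q<r}P_r$, I would argue that if $q<r$ then the defining condition for $P_r$ is logically weaker than that for $P_q$: any open set $P$ with $\Vdash_P\widehat{s}(P)\in U$ for all $s>q$ in particular satisfies $\Vdash_P\widehat{s}(P)\in U$ for all $s>r$ (since $\{s:s>r\}\subseteq\{s:s>q\}$), so each such $P$ is among the sets whose union defines $P_r$. Hence $P_q\subseteq P_r$ for every $r>q$, giving $P_q\subseteq\bigcap_{q<r}P_r$. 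This direction is essentially bookkeeping with the quantifier ranges.

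The reverse inclusion $\bigcap_{q<r}P_r\subseteq P_q$ is the substantive one. Let $\sigma\in\bigcap_{q<r}P_r$; I want to produce an open neighbourhood $P$ of $\sigma$ with $\Vdash_P\widehat{s}(P)\in U$ for all $s>q$, which will place $\sigma$ in $P_q$. Fix any rational $s>q$ and pick a rational $r$ with $q<r<s$. Since $\sigma\in P_r$ and $P_r$ is a union of open sets each satisfying the defining property, there is an open neighbourhood $P^{(r)}$ of $\sigma$ with $\Vdash_{P^{(r)}}\widehat{t}(P^{(r)})\in U$ for all $t>r$; in particular, since $s>r$, we get $\Vdash_{P^{(r)}}\widehat{s}(P^{(r)})\in U$. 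The difficulty is that this neighbourhood $P^{(r)}$ depends on $s$, so a priori we only get, for each $s>q$, \emph{some} neighbourhood of $\sigma$ forcing $\widehat{s}\in U$, not a single neighbourhood working uniformly for all $s>q$. The key step to resolve this is to use property (2) of a Dedekind cut (upward closure of $U$): if $\Vdash_{P^{(r)}}\widehat{r'}(P^{(r)})\in U$ for one rational $r'$ with $q<r'$, then automatically $\Vdash_{P^{(r)}}\widehat{s}(P^{(r)})\in U$ for \emph{all} $s>r'$. So it suffices to produce a single neighbourhood $P$ of $\sigma$ and a single rational $r'$ with $q<r'$ such that $\Vdash_P\widehat{r'}(P)\in U$: take any rational $r'$ with $q<r'$, use $\sigma\in P_{r'}\supseteq\bigcap_{q<\rho}P_\rho$ (wait — more carefully, use $\sigma\in\bigcap_{q<\rho}P_\rho$ which includes membership in $P_\rho$ for $\rho$ slightly above $q$) to extract a neighbourhood $P\ni\sigma$ with $\Vdash_P\widehat{t}(P)\in U$ for all $t>\rho$, then since $r'$ can be chosen above such a $\rho$ we get $\Vdash_P\widehat{s}(P)\in U$ for all $s>r'\geq$ the relevant threshold, and by upward closure for all $s>q$. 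Hence $\sigma\in P_q$.

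I expect the main obstacle to be exactly this uniformity issue — passing from ``for each $s>q$ there is a neighbourhood'' to ``there is a neighbourhood good for all $s>q$'' — and the clean way around it is to exploit that the $P_\rho$ are nested (the first inclusion, already proved) together with the upward-closure axiom (2) for Dedekind cuts, so that a single witness at one rational threshold above $q$ propagates to all larger rationals. Once that is in place, both inclusions close and the lemma follows; combined with the two preceding lemmas this verifies that the $P_r$ form a spectral family of clopen sets, completing the correspondence in this direction.
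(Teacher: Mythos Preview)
Your first inclusion is correct and matches the paper's argument exactly. The problem is in the reverse inclusion: the upward-closure axiom~(2) for $U$ runs in the wrong direction for what you need. From $\Vdash_P\widehat{r'}(P)\in U$ with $r'>q$ you obtain $\Vdash_P\widehat{s}(P)\in U$ only for $s\ge r'$, never for the rationals $s$ with $q<s<r'$. So producing a single witness at one threshold $r'>q$ does \emph{not} suffice to place $P$ among the open sets whose union defines $P_q$; you still miss the strip $(q,r')$. Your own diagnosis of the difficulty---that the extracted neighbourhood depends on the threshold---is accurate, but axiom~(2) cannot close that gap, and the final clause ``by upward closure for all $s>q$'' is precisely where the argument breaks. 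No amount of choosing $\rho$ ``slightly above $q$'' helps, since any fixed $\rho>q$ leaves the interval $(q,\rho]$ uncovered on that particular neighbourhood.

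The paper does not try to avoid the uniformity problem; it confronts it directly. For each $r>q$ it chooses a neighbourhood $Q_r\ni\sigma$ with $\Vdash_{Q_r}\widehat{s}(Q_r)\in U$ for all $s>r$, and then simply takes $Q=\bigcap_{r>q}Q_r$. Then $\sigma\in Q$, and given any $s>q$ one picks $r$ with $q<r<s$ and uses $Q\subseteq Q_r$ together with restriction to get $\Vdash_Q\widehat{s}(Q)\in U$; hence $\sigma\in P_q$. This move exploits the special feature of the present base space that the open sets are indexed by the complete Boolean algebra $P(\mathcal{U})$, so the infinite intersection is again an admissible open; in an arbitrary topological space that step would fail, which is presumably why you reached for a workaround. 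Here, though, the intersection is the intended argument, and your alternative via upward closure cannot replace it.
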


\begin{proof}
If $\sigma\in P_q$, there exists a neighbourhood $P$ of $\sigma$ such that $\Vdash_P \widehat{s}(P)\in U$ for all $s>q$. Let $r>q$ arbitrary, then $\Vdash_P \widehat{s}(P)\in U$ for all $s>r$, then $\sigma\in P_r$ for all $r>q$ thus $\sigma\in\bigcap_{q<r}P_r$. On the other hand if $\sigma\in \bigcap_{q<r}P_r$, for all $r>q$ there exists a neighbourhood $Q_r$ of $\sigma$ such that $\Vdash_{Q_r}\widehat{s}(Q_r)\in U$ for all $s>r$. Let $Q=\bigcap_{r>q}Q_r$, then $\Vdash_Q\widehat{s}(Q)\in U$ for all $s>q$ and $\sigma\in Q$. Then $\sigma\in P_q$.
\end{proof}

Given a real number $r\in\mathbb{R}$, define 
\[P_r=\bigcap_{q\in\mathbb{Q}, r<q} P_q.\]
From the above results it is clear that the projections $\{P'_r\}_{r\in\mathbb{R}}$ associated to the family of clopen subsets $\{P_r\}_{r\in\mathbb{R}}$ form a spectral resolution of the identity. Since  there is a correspondence between self-adjoint operators in $\mathcal{U}$ and spectral families in $\mathcal{U}$, we have  a correspondence between the real numbers of the cumulative hierarchy of variable sets over $X=S_{\mathcal{U}}$ and the self-adjoint operators in  $\mathcal{U}$. This is the correspondence proved by Takeuti in the context of Boolean  valued models of set theory, the advantage here is that we get an explicit construction of the real numbers in function of the spectral representation of the respective self-adjoint operator. This construction will be probably fundamental  to understand the relation of quantum expressions and its classical counterparts.\\
There is an alternative proof of  Takeuti's correspondence in this context that even if it is less constructive than the proof presented above, it has a theoretical interest and it will probably be useful in more general contexts; for the sake of completeness we include this proof here. We start with a result which is proved in a more general context in \cite{maclane} ( theorem 2, chapter VI), we give here the proof for the hierarchies of variable sets.

\begin{teor}
Let $X$ be a topological space, and $\mathbb{V}^{X}$ the hierarchy of variable sets constructed over $X$. Let $P\in Op(X)$, then there is a correspondence between the objects forced as real numbers in $V(P)$ and the real valued continuous functions over $P$.
\end{teor}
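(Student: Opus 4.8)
The plan is to establish the correspondence by constructing maps in both directions and showing they are mutually inverse. For one direction, given a real-valued continuous function $h:P\to\mathbb{R}$, I would build a Dedekind cut $(L_h,U_h)$ in $V(P)$ fibrewise: for each open $Q\subseteq P$ set
\[
L_h(Q)=\{\widehat{q}(Q)\in\widehat{\mathbb{Q}}(Q)(Q): q<h(\sigma)\text{ for all }\sigma\in Q\},
\]
and dually $U_h(Q)=\{\widehat{q}(Q): q>h(\sigma)\text{ for all }\sigma\in Q\}$. The first task is to check that $L_h$ and $U_h$ are genuine elements of the hierarchy, i.e. that they satisfy conditions 1--3 of Definition \ref{hierarchy}: the restriction-compatibility is immediate from the defining inequalities, and the gluing condition (condition 3) follows because a rational inequality that holds on each member of an open cover holds on the union. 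The second task is to verify that $(L_h,U_h)$ forces the five Dedekind-cut axioms, using the Kripke--Joyal clauses of Theorem \ref{kripkejoyal}; here continuity of $h$ is exactly what is needed: for instance, locatedness (axiom 4) at a point $\sigma$ requires that for $q<r$ either $q<h$ or $h<r$ on a neighbourhood of $\sigma$, which holds by continuity since $h(\sigma)$ lies in one of the two open intervals $(-\infty,r)$ or $(q,\infty)$.

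For the converse direction, given an object forced as a real number in $V(P)$, presented by a cut $(L,U)$, I would recover a function $h:P\to\mathbb{R}$ by the fibrewise supremum
\[
h(\sigma)=\sup\{q\in\mathbb{Q}:\exists\,\text{nbhd }Q\ni\sigma,\ \Vdash_Q \widehat{q}(Q)\in L\}.
\]
Well-definedness (the set is nonempty and bounded above) comes from axiom 1 of the Dedekind cut together with the local-truth corollary \ref{localtruth}; continuity of $h$ follows because, by the Kripke--Joyal semantics, the preimage $h^{-1}((q,r))$ is a union of open sets on which $\widehat{q}\in L$ and $\widehat{r}\in U$ are forced, hence open. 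This argument is structurally parallel to the three lemmas already proved for the spectral-family construction, where the sets $P_r$ played the role of sublevel sets of $h$; indeed $P_r=\{\sigma: h(\sigma)\le r\}$ up to the clopen/open subtleties of the quantum topology.

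Finally I would check that the two constructions are mutually inverse. Starting from a continuous $h$, forming $(L_h,U_h)$ and then taking the fibrewise supremum returns $h$ pointwise by density of $\mathbb{Q}$ in $\mathbb{R}$; conversely, starting from a cut $(L,U)$, extracting $h$ and rebuilding $(L_h,U_h)$ recovers the original cut up to the equivalence $\sim_x$ that identifies sections of the sheaf of germs — this last identification is where one must be slightly careful, since two cuts representing the same real need only agree locally. The main obstacle I anticipate is precisely this bookkeeping of \emph{local} versus \emph{global} data: the Kripke--Joyal clauses for $\exists$ and $\vee$ force us to work with open covers rather than single open sets, so verifying the cut axioms (especially axioms 1 and 3, which are existential) and verifying the inverse property both require carefully chosen covers and an appeal to exactness of the presheaf $V$ (Lemma \ref{...}, the exactness property from Definition \ref{hierarchy}). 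Once that is handled, the correspondence is a routine, if lengthy, unwinding of the semantics; combining it with the spectral-family construction already given and the classical bijection between self-adjoint operators in $\mathcal{U}$ and their spectral families recovers Takeuti's correspondence as a corollary.
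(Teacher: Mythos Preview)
Your proposal is correct and follows essentially the same route as the paper: the paper defines $L_f(Q)=\{\widehat q(Q):q<f(\lambda)\text{ for all }\lambda\in Q\}$ and its dual $U_f$ exactly as you do, and in the other direction it recovers the real number at $\lambda$ as the classical Dedekind cut $(L_\lambda,U_\lambda)$ of rationals locally forced into $L$ (resp.\ $U$), which is your fibrewise supremum in different clothing. If anything your outline is more complete, since you explicitly plan to check the hierarchy conditions of Definition~\ref{hierarchy} and the mutual-inverse property, both of which the paper's proof leaves implicit.
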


\begin{proof}
Let $P\in Op(X)$,  $\lambda \in P$ and 
\[U(P):Op(P)\rightarrow \bigcup_{Q\subseteq P} V(Q)\]
\[L(P):Op(P)\rightarrow \bigcup_{Q\subseteq P} V(Q)\]
such that define a real number in $V(P)$. Consider 
\[L_{\lambda}=\{q\in \mathbb{Q}: \exists Q\subseteq P \text{ open set  s.t. }\lambda\in Q,\quad \widehat{q}(Q) \in L(P)(Q)\}\]
\[U_{\lambda}=\{r\in\mathbb{Q}: \exists Q\subseteq P\text{ open set s.t. }\lambda \in Q,\quad \widehat{r}(Q)\in U(P)(Q)\}\]
From the properties 1-5 above we have that the above sets defined a Dedekind cut in the classical sense  then a real number $a_{\lambda}=(L_{\lambda},U_{\lambda})$. Define the function 
\[f_{L,U}: P\rightarrow \mathbb{R}\]
such that $f_{L,U}(\lambda)=a_{\lambda}$. To show that the function $f_{L,U}$ is continuous consider an interval $(q,r)$ with $q,r\in\mathbb{Q}$ and $\lambda\in f^{-1}_{L,U}((q,r))$. We have in particular that $q\in L_{\lambda}$ and $r\in U_{\lambda}$ since $q<f_{L,U}(\lambda)=a_{\lambda}<r$, then there exists $Q$ and $T$ neighbourhoods of $\lambda$ such that $\widehat{q}(Q)\in L(P)(Q)$ and $\widehat{r}(T)\in U(P)(T)$. Thus, for every $\sigma\in Q\cap T$,  we have    $q\in L_{\sigma}$ and $r\in U_{\sigma}$ thus $\sigma\in f^{-1}((q,r))$  therefore $Q\cap T\subseteq f^{-1}_{L,U}((q,r))$. Since $\lambda$ was arbitrary and the rational intervals form a base we have proved the continuity of $f_{L,U}$.\\
Consider now $f:P\rightarrow \mathbb{R}$ a continuous function and
\begin{align*}
U_f(P):Op(P)& \rightarrow \bigcup_{Q\subseteq  P}V(Q)\\
Q & \mapsto \{\widehat{q}(Q)\in \widehat{\mathbb{Q}}(Q)(Q): \forall \lambda\in Q,  q>f(\lambda)  \}
\end{align*}
\begin{align*}
L_f(P):Op(P)& \rightarrow \bigcup_{Q\subseteq P}V(Q)\\
Q & \mapsto \{\widehat{q}(Q)\in \widehat{\mathbb{Q}}(Q)(Q): \forall \lambda \in Q,q< f(\lambda) \}
\end{align*}  
We want to prove that $L_f(P), U_f(P)$ define a real number in $V(P)$ i.e. that the conditions 1-5 above are satisfied. Let $Q\subseteq P$ an open set:\\
1. Let $Q_n=\{\lambda \in Q: -n<f(\lambda)<n\}$, then $\{Q_n\}_{n\in\mathbb{N}}$ form an open cover of $Q$, and we have $\widehat{-n}(Q_n)\in  L_f(P)(Q_n)$ and $\widehat{n}(Q_n)\in U_f(P)(Q_n)$.\\
2. Follows directly from the definitions of $L_f$ and $U_f$.\\
3.  Let $T\subseteq Q$ and $\widehat{q}(T)\in L_f(P)(T)$, then for any $\lambda\in T$ we have $q<f(\lambda)$. Since $f$ is continuous there exists a neighbourhood of $\lambda$,  $T_{\lambda}$ and $r_{\lambda}\in\mathbb{Q}$ such that for any $\sigma\in T_{\lambda}$ we have $q< r_{\lambda}<f(\sigma)$, then $\widehat{r}_{\lambda}\in L_f(P)(T_{\lambda})$. (Analogously for $U_f(P)$).\\
4. Let $T\subseteq Q$ and $q, r\in\mathbb{Q}$ such that $q<r$. Consider $T_1=\{\lambda\in T: q<f(\lambda)\} $ and $T_2=\{\lambda\in T: f(\lambda)<r\}$. We have $T=T_1\cup T_2$ and $\widehat{q}(T_1)\in L_f(P)(T_1)$, $\widehat{r}(T_2)\in U_f(P)(T_2)$.\\
5. Clear from the definition.
\end{proof}

\begin{corol}
Let $\mathcal{U}$ an abelian Von Neumann subalgebra of operators of a Hilbert space $H$, and $X=S_{\mathcal{U}}$ the Gelfand spectrum of $\mathcal{U}$ with the topology given by the sets $\{P\}_{P'\in P(\mathcal{U})}$ as defined above. There is a correspondence between the real numbers of the hierarchy $\mathbb{V}^{X}$ and the self adjoint operators of $\mathcal{U}$.
\end{corol}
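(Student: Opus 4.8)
The plan is to deduce the corollary from the preceding Theorem specialised to $P=X$. By that Theorem the objects forced as real numbers in $V(X)$ are in bijective correspondence with the continuous real-valued functions $f\colon X\to\mathbb{R}$, where $X=S_{\mathcal{U}}$ carries the topology whose opens are the clopen sets $\{P\}_{P'\in P(\mathcal{U})}$. So it remains to set up a bijection between these continuous functions and the self-adjoint operators of $\mathcal{U}$; this is essentially a form of the Gelfand representation theorem, but one must take into account that the topology on $S_{\mathcal{U}}$ used here is coarser than the usual weak-$*$ topology (the basic clopen set $P$ is $\widehat{P'}^{-1}(\{1\})$ for the Gelfand transform $\widehat{P'}$ of the projection $P'$, hence is already weak-$*$-clopen).

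First I would associate to a continuous $f\colon X\to\mathbb{R}$ a spectral family. For $r\in\mathbb{R}$ put $P_r^{f}:=f^{-1}\big((-\infty,r)\big)$. Continuity makes this an open set of $X$, and since the lattice of opens of $X$ is isomorphic to the Boolean algebra $P(\mathcal{U})$, each $P_r^{f}$ corresponds to a projection $P_r'\in P(\mathcal{U})$. Monotonicity and the meet relation $P_q'=\bigwedge_{q<r}P_r'$ follow from the corresponding facts about sublevel sets together with the three lemmas proved above, which are exactly what is needed to control infinite intersections of the clopen sets $P_r$ (in this topology an infinite intersection of opens need not be open, so these lemmas do genuine work); here one also uses that the isomorphism between the lattice of opens and $P(\mathcal{U})$, being an order isomorphism, automatically preserves arbitrary joins and meets. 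Finally $f$ is bounded: $X$ is quasi-compact (it is coarser than the compact weak-$*$ topology on $S_{\mathcal{U}}$, which is compact since $\mathcal{U}$ is unital), so the open cover $\{f^{-1}((-n,n))\}_{n}$ admits a finite, hence — being nested — a single member equal to $X$; therefore $\bigwedge_{r}P_r'=0$ and $\bigvee_{r}P_r'=1$, and $\{P_r'\}_{r\in\mathbb{R}}$ is a bona fide spectral resolution of the identity in $\mathcal{U}$, which by the standard correspondence (see \cite{kadison}) determines a self-adjoint operator $A\in\mathcal{U}$.

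For the converse I would run the construction of the first (constructive) proof of this section backwards: given a self-adjoint $A\in\mathcal{U}$, its spectral family lies in $P(\mathcal{U})$ because $\mathcal{U}$ is a von Neumann algebra, it yields the clopen sets $\{P_r\}_{r\in\mathbb{R}}$, and one defines $f_A(\lambda):=\sup\{r\in\mathbb{Q}:\lambda\notin P_r\}$ — equivalently, $f_A(\lambda)$ is the real number whose Dedekind cut is the pair $(L_A,U_A)$ read off at the point $\lambda$. Checking that $f_A$ is continuous amounts to showing that $f_A^{-1}((q,r))$ is a union of basic clopen sets, which is immediate from the definition of the $P_r$. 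It then has to be verified that $f\mapsto A$ and $A\mapsto f_A$ are mutually inverse, a direct computation with sublevel sets and spectral families. Composing this bijection with the one from the Theorem gives the stated correspondence, and comparing the two halves of the section shows that the real number of $\mathbb{V}^{X}$ attached to a self-adjoint operator $A$ is precisely the cut $(L_A,U_A)$ constructed explicitly above.

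I expect the only non-routine point to be the passage between continuous functions and spectral families: making sure every continuous real function on $X$ is bounded (for which quasi-compactness of the clopen topology is the clean argument) and that the set-theoretic behaviour of the sublevel sets $P_r$ matches the Boolean-algebra operations of $P(\mathcal{U})$ in the infinite case — which is exactly where completeness of $P(\mathcal{U})$ (i.e. that $\mathcal{U}$ is a von Neumann and not merely a $C^{*}$-algebra) and the three lemmas just proved are indispensable. Once these are in hand, the remaining verifications are routine unwindings of the definitions.
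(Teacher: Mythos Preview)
Your approach is correct in outline but genuinely different from the paper's. The paper does not build a spectral family from sublevel sets at all: for the direction ``real number $\Rightarrow$ operator'' it simply observes that the clopen topology is coarser than the weak-$*$ topology, so the continuous $f$ furnished by the preceding Theorem is automatically weak-$*$-continuous, and then invokes the Gelfand representation theorem to obtain $B_f\in\mathcal{U}$ with $f(\lambda)=\lambda(B_f)$; self-adjointness follows because $f$ is real-valued. For the direction ``operator $\Rightarrow$ real number'' it sets $f_A(\lambda)=\lambda(A)$ and checks continuity via $f_A^{-1}((q,r))=P_r\setminus P_q$. Your route---extracting a spectral family from the sublevel sets of $f$ and appealing to the spectral theorem rather than to Gelfand---is more elementary in that it avoids the full Gelfand machinery, at the cost of more bookkeeping; the paper's route is shorter and makes the link with Gelfand duality explicit, which is one of the points the author wants to emphasise.

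Two technical points in your write-up need adjustment. First, with $P_r^f=f^{-1}((-\infty,r))$ you do \emph{not} get the right-continuity $P'_q=\bigwedge_{r>q}P'_r$ that you claim: in the clopen topology every open set is closed, so $f^{-1}(\{q\})$ is clopen, and whenever it is nonempty one has $\bigcap_{r>q}f^{-1}((-\infty,r))=f^{-1}((-\infty,q])\neq f^{-1}((-\infty,q))$. Your family is left-continuous instead (since $\bigcup_{r<q}f^{-1}((-\infty,r))=f^{-1}((-\infty,q))$), which still determines a self-adjoint operator, but you should either switch to $P_r^f=f^{-1}((-\infty,r])$ or adjust the stated continuity condition. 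Second, ``the three lemmas proved above'' concern the sets $P_r=\bigcup\{P:\forall q>r,\ \Vdash_P\widehat q\in U\}$ built from a Dedekind cut, not sublevel sets of an arbitrary continuous $f$; they do not apply verbatim to your $P_r^f$. You can either reprove the needed facts directly for sublevel sets (easy, using boundedness of $f$ from quasi-compactness as you note) or first pass through the cut $(L_f,U_f)$ and then invoke the lemmas---but as written the citation is misplaced.
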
 

\begin{proof}
Let $A\in \mathcal{U}$ be a self adjoint operator and as before $\{P'_{r}\}_{r\in \mathbb{R}}$ the associated spectral family. Define the function $f_A: X\rightarrow \mathbb{R}$ such that $f_A(\lambda)=\lambda(A)$. We want to show that $f_A$ is continuous. Consider $q,r\in\mathbb{R}$ such that $q<r$, from the construction of the spectral family (see \cite{kadison} theorem 5.2.2) we know $f_A^{-1}((q,r))=P_r\setminus P_q$ which is an open set, then $f_A$ is a continuous function and by the above  theorem  it is associated with a real number in $V(P)$ for each $P\in Op(X)$.\\
Consider now $U, L$ such that 
\[U(P):Op(P)\rightarrow \bigcup_{Q\subseteq P} V(Q)\]
\[L(P):Op(P)\rightarrow \bigcup_{Q\subseteq P} V(Q)\]
define a real number for each $P\in Op(X)$, then from the above theorem we know there exists a continuous function $f:X\rightarrow \mathbb{R}$ associated to $L,U$. We know that the sets $\{P\}_{P'\in P(\mathcal{U})}$ are clopen sets also in the weak* topology over $X$ (see again \cite{kadison} theorem 5.2.2), then the function $f$ is also continuous in the weak* topology. Therefore by the Gelfand representation theorem (\cite{kadison} Theorem 4.4.3) to $f$ corresponds an operator $B_f$ such that $f(\lambda)=\lambda(B_f)$ for all $\lambda\in X$. Since $f(\lambda)=\lambda(B_f)\in \mathbb{R}$ for all $\lambda\in X$, we have that $B_f$ is a self adjoint operator (see \cite{kadison} theorem 4.3.8).
\end{proof}

The above results show how in this model deep results of operator theory recover a new interesting meaning. The value of this  new perspective is that it  probably contains the way to obtain a new picture of QM with a definite interpretation. Lets see why.\\

\indent Given $a\in\mathbb{R}$, $\widehat{a}(P)$ is a constant Dedekind cut over $P$, given by :
\[U_{\widehat{a}}(Q)=\{\widehat{q}(Q)\in \widehat{\mathbb{Q}}(Q)(Q): a<q\}\]
\[L_{\widehat{a}}(Q)=\{\widehat{q}(Q)\in \widehat{\mathbb{Q}}(Q)(Q): q<a\},\]
with $Q\subseteq P$ open. On the other hand, consider the spectral family $\{Q'_r\}_{r\in\mathbb{R}}$ such that $Q'_r=0$ if $r<a$ and $Q'_r=I$ if $r\geq a$, then the real number $(L,U)$ associated to this spectral family over an open set $P$ satisfies:

\begin{align*}
U(Q)&=\{\widehat{q}(Q)\in \widehat{\mathbb{Q}}(Q)(Q): \exists r\in\mathbb{Q}, r<q,  Q\nsubseteq P_r^{c} \}\\
&=\{\widehat{q}(Q)\in \widehat{\mathbb{Q}}(Q)(Q): a<q\}
\end{align*}

for any  open set $Q\subseteq P$ since   $Q\nsubseteq P^{c}_a=\emptyset$ for any $Q$ non empty. In the same way we have $L(Q)=\{\widehat{q}(Q)\in \widehat{\mathbb{Q}}(Q)(Q): q<a\}$ for any $Q\subseteq P$. Therefore $\widehat{a}(P)$ is the real number associated to the operator defined by $\{Q'_r\}_{r\in\mathbb{R}}$ (which is the operator $aI$).\\
As above consider again a self-adjoint operator $A\in\mathcal{U}$, $\{P'_r\}_{r\in\mathbb{R}}$ the spectral family associated to $A$ and  $(L_A,U_A)$ the real number defined by $A$. In $P_a$, with  $a\in\mathbb{R}$, consider $Q\subseteq P_a$. Since $Q\nsubseteq P^{c}_a$ we have 

\[
\{\widehat{q}(Q)\in \widehat{\mathbb{Q}}(Q)(Q): a<q\}\subseteq U_A(Q)=\{\widehat{q}(Q)\in \widehat{\mathbb{Q}}(Q)(Q): \exists r\in\mathbb{Q}, r<q,  Q\nsubseteq P_r^{c} \}.
\]

In the same way considering $P^{c}_a$ and $Q\subseteq P^{c}_a$, since $P^{c}_a\subseteq P^{c}_q$ for any $q<a$ we have  $Q\subseteq P^{c}_q$ for any $q<a$; then

\[\{\widehat{q}(Q)\in \widehat{\mathbb{Q}}(Q)(Q): q<a\}\subseteq L_A(Q)=\{\widehat{q}(Q)\in \widehat{\mathbb{Q}}(Q)(Q):   Q\subseteq P_q^{c} \}.
\]

Now remember that in the classical sense if we have two real numbers $a,b\in\mathbb{R}$ with $a=(L_a,U_a)$ and $b=(L_b, U_b)$ we define $a\leq b$ if and only if $U_b\subseteq U_a$ and $a\geq b$ if and only if $L_b\subseteq  L_a$. Applying this definition to our model the results above translate as
\[\Vdash_{P_a} A\leq \widehat{a}(P_a)\]
and
\[\Vdash_{P^{c}_a} \widehat{a}(P^{c}_a)\leq A\]
respectively, where we have denoted the real number defined by the self-adjoint operator $A$ using the same symbol. Remember from section \ref{logic} that $[[A\leq \widehat{a}]]_X$, $[[\widehat{a}\leq A]]_X$ represent the set of elements of $X$ where the respective propositions hold, since $P_a\cup P^{c}_a= X$ we have shown that
\[P_a=[[A\leq \widehat{a}]]_X\]
\[P^{c}_a=[[\widehat{a}\leq A]]_X.\]

Let $c,d\in\mathbb{R}$ such that $c<d$, we have

\begin{align*}
 [[\widehat{c}\leq A \leq \widehat{d}]]_X&=[[(\widehat{c}\leq A)\wedge(A\leq \widehat{d})]]_X\\
 &=[[\widehat{c}\leq A]]_X\cap [[A\leq \widehat{d}]]_X\\
 &=P^{c}_c\cap P_d\\
 &=P_d\setminus P_c.
 \end{align*}
 
The proof of this result was given  in the context of boolean valued models  in \cite{ozawa} (theorem 6.2), comparing it with the proof here presented, it becomes clear that the sheaf based approach gives an important simplification.\\

As before we denote by $[[\widehat{c}\leq A \leq \widehat{d}]]_X'$ the projector associated to the open set $[[\widehat{c}\leq A\leq \widehat{d}]]_X$. If $h\in H$ is a state vector,  we know from the classical formalism of QM that the probability of the  observable $A$ to assume some value in the interval $[c, d]$ in the state $h$ is given by 
 
 \[ || [[\widehat{c}\leq A\leq \widehat{d} ]]_X' h ||^{2}=||(P'_d-P'_c)h||^{2}, \]

but within our model we see that $h$ is just measuring the set $[[\widehat{c}\leq A\leq \widehat{d}]]_X$ of histories or universes where the proposition $\widehat{c}\leq A\leq \widehat{d}$ holds. Indeed, each quantum state $h\in H$ defines a measure $\mu_{h}$ over $X$ given by:
\begin{align*}
\mu_{h}:Op(X)&\rightarrow \mathbb{R}\\
P & \mapsto ||P'h||^{2}. 
\end{align*}
As we just saw these open sets are of the form $P=[[\varphi]]_X$, where $\varphi$ is a proposition about the quantum system, thus $P$ is the set of histories where the proposition $\varphi$ is verified and $\mu_h$ is measuring the proportion of histories where any proposition is verified. This is literally the interpretation given to quantum states in the Deutsch-Everett multiverse interpretation of Quantum Mechanics. It is important to note that the type of propositions about the Quantum System are not limited just  to propositions referring to the values of physical variables as in the example above; in this case the results are valid for any kind of propositions about the quantum system that can be expressed with the language of set theory, probably even propositions about emergent properties related to the conception of classical spacetime.\\

\subsection{Generic Models and the Emergence of Classical Reality}

Any new mathematical model that will  improve our understanding of QM has to explain or give us a hint of how the classical reality of our everyday experience emerges from the deep Quantum reality of the elementary particles that constitute everything. Surprisingly, in the models developed above there is a way to collapse the multiversal model of Quantum Variable sets to a classical model in such a way that what is perceived by the classical universe is conditioned by the structure of the multiversal model.  This can be done over any Sheaf of structures, however, for the sake of clarity,  we will explain how this process work on the hierarchy of Variable Sets. The main idea in this section will be the notion of generic model and genericity. This notion was originally introduced  by Paul Cohen in his works on the independence of the Continuum Hypothesis and the Axiom of Choice; becoming then fundamental in modern set theory and model theory. In the context of sheaves of structures it is possible to introduce the notion of generic model and genericity in a more general framework which unifies previous approaches, and where hypothesis about enumerability are not required (see \cite{caicedo}). We will see then how to connect these ideas related to the foundations of mathematics to Quantum Mechanics via the hierarchy of Quantum Variable Sets.\\

We start with an easy definition.

\begin{defin} Let $X$ be a topological space and  $Op(X)$ the set of open sets of $X$. Let  $\mathcal{F}\subset
Op(X)$, $\mathcal{F}$ is called a filter of open sets  of  $X$ if:\\
i. $X\in\mathcal{F}$.\\
ii. If  $U,V\in\mathcal{F}$ then  $U\cap V\in\mathcal{F}$.\\
iii.Given  $V\in\mathcal{F}$ If $V\subset U$ then
$U\in\mathcal{F}$.
\end{defin}

Filters over the base space $X$ of a Sheaf of Structures will be the tools we will use to collapse the intuitionistic sheaf to a classical model. To do this we need an special kind of filters where the essence of being classical is captured in such a way that for each proposition the excluded middle holds in an open set of the filter, and for each  existential proposition there exists an open set where the existential is verified in the classical sense. In other words:

\begin{defin} 
Let  $\mathcal{F}$ be a  filter of open sets of $X$, we say that $\mathcal{F}$ is a Generic Filter of 
$\mathbb{V}^{X}$ if:\\
i. Given $\varphi(v_1,...,v_n)$ a first order formula (a proposition) and  $\sigma_1,...,\sigma_n$ arbitrary sections of $\mathbb{V}^{X}$ defined on $P\in\mathcal{F}$, there exists 
$Q\in\mathcal{F}$ such that 
\[\Vdash_Q\varphi[\sigma_1,...,\sigma_n] \text{ or }
 \Vdash_Q\neg\varphi[\sigma_1,...,\sigma_n].\]
ii. Given $\sigma_1,...,\sigma_n$ arbitrary sections of\quad  
$\mathbb{V}^{X}$ defined  on  $P\in\mathcal{F}$, and 
$\varphi(v,v_1,...,v_n)$ a first order formula. If
$\Vdash_P\exists v\varphi(v,\sigma_1,...,\sigma_n)$
then there exist $Q\in\mathcal{F}$ and $\sigma$ defined on $Q$ such that
$\Vdash_Q\varphi(\sigma,\sigma_1,...,\sigma_n)$.
\end{defin}

The next important result shows that generic filters exist, the proof of this result is contained in \cite{caicedo}.

\begin{teor}\label{generic}
A filter is a generic filter of $\mathbb{V}^{X}$  if and only if  it is a maximal filter of $X$.
\end{teor}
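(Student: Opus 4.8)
## Proof Proposal

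The plan is to prove both implications of the equivalence ``generic filter $\iff$ maximal filter,'' where by maximal filter I understand a filter of open sets that is maximal with respect to inclusion among all proper filters (equivalently, an ultrafilter on the Heyting algebra $Op(X)$, so that for every open $U$ either $U\in\mathcal{F}$ or $Int(X\setminus U)\in\mathcal{F}$). The key technical bridge will be Corollary~\ref{localtruth} (local truth) together with the defining clauses of the forcing relation in Definition~\ref{logic}, especially clauses 4 (negation) and 6 (existential), which already have the right ``there exists a neighbourhood / there exists a witness at $x$'' shape that genericity demands.

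First I would prove that a maximal filter $\mathcal{F}$ is generic. For clause (i), take a formula $\varphi$ and sections $\sigma_1,\dots,\sigma_n$ defined on $P\in\mathcal{F}$, and consider the open set $V=[[\varphi(\sigma_1,\dots,\sigma_n)]]_P=\{x\in P:\mathbb{V}^X\Vdash_x\varphi\}$, which is open by Corollary~\ref{localtruth}. By maximality either $V$ or $Int(P\setminus V)$ belongs to $\mathcal{F}$ (using that $P\in\mathcal{F}$ and intersecting). If $V\in\mathcal{F}$ then $\mathbb{V}^X\Vdash_V\varphi$ by the definition of $\Vdash_U$. If $W:=Int(P\setminus V)\in\mathcal{F}$, I claim $\mathbb{V}^X\Vdash_W\neg\varphi$: indeed, for every $x\in W$ there is a neighbourhood of $x$ contained in $W\subseteq X\setminus V$ on which $\varphi$ fails everywhere, which is exactly clause 4 of Definition~\ref{logic}. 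Either way we get an open set in $\mathcal{F}$ deciding $\varphi$. For clause (ii), suppose $\mathbb{V}^X\Vdash_P\exists v\,\varphi(v,\bar\sigma)$ with $P\in\mathcal{F}$. By clause 6 of Definition~\ref{logic}, for each $x\in P$ there is a section $\tau_x$ defined at $x$ (hence on some neighbourhood $U_x$, shrinking so $\mathbb{V}^X\Vdash_{U_x}\varphi(\tau_x,\bar\sigma)$ using local truth) with $\mathbb{V}^X\Vdash_x\varphi(\tau_x,\bar\sigma)$. The sets $\{U_x\}_{x\in P}$ cover $P$; since $\mathcal{F}$ is maximal it is prime (for a maximal filter, if a finite — in fact arbitrary — union of opens lies in $\mathcal{F}$ then one of them does; this needs the argument that $\mathcal{F}$ prime is equivalent to $\mathcal{F}$ maximal, which I would include), so some $U_x\in\mathcal{F}$, and $\tau_x$ on $Q:=U_x$ is the required witness.

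Conversely I would show a generic filter $\mathcal{F}$ is maximal. Suppose not; then there is a proper filter $\mathcal{G}\supsetneq\mathcal{F}$, so there is an open set $U$ with $U\in\mathcal{G}\setminus\mathcal{F}$. Apply genericity clause (i) to the atomic-type formula ``$x\in y$'' or more simply to a sentence $\varphi$ whose truth value $[[\varphi]]_X$ equals $U$ — here I would use Takeuti/the construction of variable sets to exhibit, for any open $U$, a proposition (e.g. about membership of suitably chosen sections, or $\sigma=\tau$ for two sections that agree exactly on $U$, as in the Hausdorff example of Section~2) with $[[\varphi]]_X=U$. Then genericity gives $Q\in\mathcal{F}$ with $\mathbb{V}^X\Vdash_Q\varphi$ or $\mathbb{V}^X\Vdash_Q\neg\varphi$, i.e. $Q\subseteq U$ or $Q\subseteq Int(X\setminus U)$. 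The first forces $U\in\mathcal{F}$ by the upward-closure of filters, contradicting $U\notin\mathcal{F}$; the second gives $Q$ disjoint from $U$, but $Q\in\mathcal{F}\subseteq\mathcal{G}$ and $U\in\mathcal{G}$ force $\emptyset=Q\cap U\in\mathcal{G}$, contradicting properness of $\mathcal{G}$. Hence $\mathcal{F}$ is maximal.

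The main obstacle I anticipate is the representation step in the converse: showing that every open set $U$ is the truth value $[[\varphi]]_X$ of some first-order formula evaluated at definable sections of $\mathbb{V}^X$. One clean route is to note that the characteristic-type section $\chi_U$ defined by $\chi_U(W)=\{\widehat\emptyset(W)\}$ if $W\subseteq U$ and $\chi_U(W)=\emptyset$ otherwise is an element of $V(X)$ (it satisfies conditions 1--3 of Definition~\ref{hierarchy} precisely because $\mathcal{F}$-independent gluing along covers holds for the constant presheaf), and then $[[\widehat\emptyset(X)\in\chi_U]]_X=U$ by the definition $\Vdash_W f\in g\iff f\in g(W)$. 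A secondary subtlety is the equivalence ``maximal $\iff$ prime'' for filters of open sets, which I would dispatch with the standard argument (if $U\cup V\in\mathcal{F}$ but neither is in $\mathcal{F}$, the filters generated by $\mathcal{F}\cup\{U\}$ and $\mathcal{F}\cup\{V\}$ are both proper, contradicting maximality of one of them), and which is exactly what makes clause (ii) of genericity go through for arbitrary open covers rather than just finite ones.
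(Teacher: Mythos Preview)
The paper does not supply its own proof of this theorem; it simply refers the reader to Caicedo's article \cite{caicedo}. So there is no in-paper argument to compare against, and the relevant question is whether your proposal stands on its own.

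Your converse direction (generic $\Rightarrow$ maximal) is fine, and the representation of an arbitrary open $U$ as $[[\widehat{\emptyset}\in\chi_U]]_X$ via the section $\chi_U$ is exactly the right move in $\mathbb{V}^X$; conditions 1--3 of Definition~\ref{hierarchy} are easily checked for $\chi_U$. Your treatment of clause (i) in the forward direction is also correct.

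The genuine gap is in your handling of clause (ii). You assert that a maximal filter of open sets is prime for \emph{arbitrary} unions (``if a finite --- in fact arbitrary --- union of opens lies in $\mathcal{F}$ then one of them does''), and you plan to derive this from the two-set argument. That inference fails: take $X=\mathbb{N}$ with the discrete topology, so $Op(X)=\mathcal{P}(\mathbb{N})$; any non-principal ultrafilter is a maximal filter of opens, yet $\mathbb{N}=\bigcup_n\{n\}$ lies in it while no singleton does. So from the cover $\{U_x\}_{x\in P}$ produced by clause~6 of Definition~\ref{logic} you cannot conclude that some $U_x\in\mathcal{F}$.

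The repair is to exploit gluing rather than complete primeness. From the local witnesses $(\tau_x,U_x)$, use Zorn's lemma to extract a maximal family $\{V_j\}_j$ of pairwise \emph{disjoint} nonempty opens, each contained in some $U_{x(j)}$; maximality forces $D:=\bigcup_j V_j$ to be dense in $P$. The sections $\tau_{x(j)}\!\upharpoonright_{V_j}$ are vacuously compatible on overlaps, so the sheaf property (condition~3 of Definition~\ref{hierarchy}) glues them to a single $\tau\in V(D)$ with $\Vdash_D\varphi(\tau,\bar\sigma)$ by locality. Finally, a maximal filter $\mathcal{F}$ \emph{does} contain every open $D$ that is dense in a member $P\in\mathcal{F}$: otherwise there is $F\in\mathcal{F}$ with $F\cap D=\emptyset$, hence the nonempty open $F\cap P\subseteq P$ misses $D$, contradicting density. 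Thus $D\in\mathcal{F}$ and $(D,\tau)$ is the required witness. This is the argument your proposal is missing.
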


Using a generic filter we can construct a classical model from a sheaf of structures.

\begin{defin}  
We can associate a classical model to $\mathbb{V}^{X}$ in the next way:\\
let
\[\mathbb{V}^{X}[\mathcal{F}]=\lim_{\rightarrow
P\in\mathcal{F}}V(U)\] i.e
\[\mathbb{V}^{X}[\mathcal{F}]=\dot{\bigcup}_{P\in\mathcal{F}}V(P)/_{\sim_{\mathcal{F}}}\]
 where for  $\sigma\in V(P)$ and $\mu\in V(Q)$
 \[\sigma\sim_{\mathcal{F}}\mu\Leftrightarrow\text{ there exists
 }R\in\mathcal{F}\text{ such that 
 }\sigma\upharpoonright_R=\mu\upharpoonright_R.\]
 Let $[\sigma]$ be the class of  $\sigma$. We define  relations and functions in the next way:
 \[([\sigma_1],...,[\sigma_n])\in \mathcal{R}^{\mathbb{V}^{X}[\mathcal{F}]}
 \Leftrightarrow\text{  there exists
 }U\in\mathcal{F}:(\sigma_1,...,\sigma_n)\in R^{V(U)}\]
 \[f^{\mathbb{V}^{X}[\mathcal{F}]}([\sigma_1],...,[\sigma_n])=
 [f^{V(U)}(\sigma_1,...,\sigma_n)]\] $\mathcal{F}.$ If $\mathcal{F}$ is a generic filter over $X$ for  $\mathbb{V}^{X}$ we say that 
 $\mathbb{V}^{X}[\mathcal{F}]$ is a generic model. 
 \end{defin}

 What is perceived by these classical universal models depends on the structure of the multiversal sheaf of variable sets. This is clearly expressed in the following result (see \cite{caicedo}).

\begin{teor}
Let $\mathcal{F}$ be a generic filter over $X$ for 
$\mathbb{V}^{X}$, then:
\begin{eqnarray*}
\mathbb{V}^{X}[\mathcal{F}]\models\varphi([\sigma_1],...,[\sigma_n])&\Leftrightarrow&\text{
there exists }P\in\mathcal{F}\text{ such that
} \Vdash_P\varphi^G(\sigma_1,...,\sigma_n)\\
&\Leftrightarrow&\{\lambda\in
X:\Vdash_{\lambda}\varphi^G(\sigma_1,...,\sigma_n)\}\in\mathcal{F},
\end{eqnarray*}
where $\varphi^G$ is the G\"odel translation\footnote{See \cite{benavides} section 0.2.} of the formula $\varphi$.
\end{teor}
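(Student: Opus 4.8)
The plan is to prove the equivalence by a two-step strategy: first establish the forcing characterization of satisfaction in the generic model $\mathbb{V}^X[\mathcal{F}]$ by induction on the structure of $\varphi$, and then transfer this to the stated form using the defining property of a generic filter together with the local-truth corollary (Corollary \ref{localtruth}). The second displayed equivalence — that there exists $P\in\mathcal{F}$ with $\Vdash_P\varphi^G(\sigma_1,\dots,\sigma_n)$ if and only if $\{\lambda\in X:\Vdash_\lambda\varphi^G(\sigma_1,\dots,\sigma_n)\}\in\mathcal{F}$ — is almost immediate: by Corollary \ref{localtruth} the set $[[\varphi^G]]_X$ is open, and $\Vdash_P\varphi^G$ means precisely $P\subseteq[[\varphi^G]]_X$, so if such a $P\in\mathcal{F}$ exists then $[[\varphi^G]]_X\in\mathcal{F}$ by upward closure (axiom iii of a filter), while conversely $[[\varphi^G]]_X$ itself serves as the witness $P$. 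So the real content is the first equivalence.

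For that, I would induct on $\varphi$. For atomic formulas, $\mathbb{V}^X[\mathcal{F}]\models(\,[\sigma_1]=[\sigma_2]\,)$ unwinds, by the definition of $\sim_\mathcal{F}$, to the existence of $R\in\mathcal{F}$ with $\sigma_1\!\upharpoonright_R=\sigma_2\!\upharpoonright_R$, which by clause 1 of Theorem \ref{kripkejoyal} is exactly $\Vdash_R\sigma_1=\sigma_2$; similarly for relation symbols using the definition of $\mathcal{R}^{\mathbb{V}^X[\mathcal{F}]}$, and here $\varphi^G=\varphi$ since the Gödel translation fixes atomic formulas. The conjunction case uses closure of $\mathcal{F}$ under finite intersection (axiom ii): from witnesses $P_1,P_2$ one takes $P_1\cap P_2$. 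The crucial cases are negation, implication, disjunction and the quantifiers, and this is exactly where genericity of $\mathcal{F}$ and the Gödel translation enter. For negation, $\mathbb{V}^X[\mathcal{F}]\models\neg\varphi$ iff $\mathbb{V}^X[\mathcal{F}]\not\models\varphi$; by the inductive hypothesis this means no $P\in\mathcal{F}$ forces $\varphi^G$, and one must produce a $Q\in\mathcal{F}$ forcing $\neg\varphi^G$ — this is precisely what clause (i) of the generic filter definition delivers (given any $P\in\mathcal{F}$ on which the sections are defined, some $Q\in\mathcal{F}$ forces $\varphi^G$ or forces $\neg\varphi^G$, and the first alternative is excluded). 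For the existential quantifier, $\mathbb{V}^X[\mathcal{F}]\models\exists v\,\varphi(v)$ produces a class $[\sigma]$ with $\mathbb{V}^X[\mathcal{F}]\models\varphi(\sigma)$, hence by induction a $Q\in\mathcal{F}$ with $\Vdash_Q\varphi^G(\sigma,\dots)$, so $\Vdash_Q\exists v\,\varphi^G(v,\dots)$; conversely, from $\Vdash_P\exists v\,\varphi^G(v,\dots)$ with $P\in\mathcal{F}$, clause (ii) of genericity yields $Q\in\mathcal{F}$ and a section $\sigma$ on $Q$ with $\Vdash_Q\varphi^G(\sigma,\dots)$, so $[\sigma]$ witnesses the existential in $\mathbb{V}^X[\mathcal{F}]$. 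The universal quantifier and implication are handled by combining the same genericity clauses (applied to the relevant negated/instantiated subformulas) — here one must be careful that the Gödel translation inserts double negations in the right places so that the classical metatheoretic reasoning over $\mathcal{F}$ matches the intuitionistic forcing semantics of Definition \ref{logic}; this bookkeeping is the reason the statement is about $\varphi^G$ and not $\varphi$ directly.

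The main obstacle is the quantifier/implication step, and specifically making the interaction between the Gödel translation and the Kripke--Joyal clauses precise. Concretely: clause 6 of Theorem \ref{kripkejoyal} only gives a witnessing section over each piece of an open \emph{cover} of $P$, not a single global section, whereas the generic model needs one genuine element $[\sigma]$; bridging this gap is exactly the role of the genericity axiom (ii), which upgrades a covering-family of local witnesses to a single $Q\in\mathcal{F}$ carrying one section — and one must check $\mathcal{F}$ is rich enough (maximality, via Theorem \ref{generic}) to always land inside the filter. I would handle this by first proving the claim for the $\neg\exists\neg$ shape (which is how $\forall$ appears after Gödel translation) and deriving the genuine $\forall$ case from it, so that every step reduces to the two generic-filter clauses plus filter closure. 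Everything else — the atomic, conjunction, and disjunction cases — is routine unwinding of definitions, and the final passage to the "$\{\lambda:\Vdash_\lambda\varphi^G\}\in\mathcal{F}$" form is just Corollary \ref{localtruth} plus upward closure as noted above.
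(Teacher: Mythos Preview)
The paper does not actually prove this theorem; it merely states it and refers the reader to \cite{caicedo} for the proof. Your inductive strategy---atomic and conjunction cases from the filter axioms, negation from genericity clause (i), existential from clause (ii), and the second displayed equivalence from Corollary \ref{localtruth} plus upward closure---is the standard argument and is essentially what one finds in Caicedo's paper, so there is no meaningful divergence to report.

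One small correction: you write that ``$\varphi^G=\varphi$ since the G\"odel translation fixes atomic formulas,'' but the paper explicitly adopts the convention $(a=b)^G=\neg\neg(a=b)$ (see the sentence immediately following the theorem). This does not break your argument, but the atomic base case then needs one extra beat: from $\Vdash_P\neg\neg(\sigma_1=\sigma_2)$ with $P\in\mathcal{F}$ you cannot read off $\sigma_1\!\upharpoonright_P=\sigma_2\!\upharpoonright_P$ directly; instead invoke genericity clause (i) to obtain $Q\in\mathcal{F}$ with either $\Vdash_Q\sigma_1=\sigma_2$ or $\Vdash_Q\neg(\sigma_1=\sigma_2)$, and rule out the second alternative on $P\cap Q\in\mathcal{F}$. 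With that adjustment your outline is correct.
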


It is not important here to know what is precisely the G\"{o}del translation, it is enough to know that $\varphi^{G}$  is a form to reformulate $\varphi$ in a classical equivalent way, for example, the G\"{o}del translation  of the proposition $(a=b)$ is $(a=b)^{G}=\neg\neg (a=b)$ which in  the sense of classical logic is equivalent. The important fact to know  is that a proposition can be deduced classically if and only if its G\"{o}del translation can be deduced intuitionistically. 

\begin{corol} Let $X$ be a topological space and $\mathcal{F}$ a generic filter of $\mathbb{V}^{X}$, then 
\[\mathbb{V}^{X}[\mathcal{F}]\models ZF.\]
\end{corol}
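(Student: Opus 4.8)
The plan is to combine the two previous theorems in this subsection: the characterization theorem for the generic model (which says $\mathbb{V}^{X}[\mathcal{F}]\models\varphi$ iff $\Vdash_P\varphi^G$ for some $P\in\mathcal{F}$) together with the theorem asserting $\mathbb{V}^{X}\Vdash ZF$, i.e. $\Vdash_X\psi$ for every axiom $\psi$ of $ZF$. So first I would fix an arbitrary axiom $\psi$ of $ZF$. Since $\mathbb{V}^{X}$ forces every axiom of $ZF$ at every node, in particular $\Vdash_X \psi$ (with no free sections, as the axioms are sentences).

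The key step is then to pass to the G\"odel translation. The crucial logical fact recalled just before the corollary is that $ZF \vdash \psi$ implies $ZF^{G}\vdash_{\mathrm{int}} \psi^{G}$, and more to the point that intuitionistic logic proves $\psi^G$ whenever classical logic proves $\psi$ from intuitionistically-forced premises; but here the cleanest route is: the axioms of $ZF$ are forced in $\mathbb{V}^{X}$ intuitionistically (that is precisely the content of the theorem $\mathbb{V}^{X}\Vdash ZF$), hence so is anything the forcing semantics validates. In particular, since $\psi$ is forced at $X$ and forcing is monotone and sound for intuitionistic reasoning, the G\"odel translation $\psi^{G}$ is also forced at $X$: indeed $\Vdash_X \psi$ already gives $\Vdash_X \psi^{G}$ because for formulas actually forced in a sheaf model one has $\Vdash_U\varphi \Rightarrow \Vdash_U\varphi^{G}$ (the G\"odel translation only inserts double negations, which are implied once the untranslated statement holds). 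Since $X\in\mathcal{F}$ for any filter $\mathcal{F}$, we get a witnessing element of $\mathcal{F}$, namely $X$ itself, with $\Vdash_X \psi^{G}$.

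Now I would invoke the characterization theorem for $\mathbb{V}^{X}[\mathcal{F}]$ directly: with $P=X\in\mathcal{F}$ and $\Vdash_P \psi^{G}$, the theorem yields $\mathbb{V}^{X}[\mathcal{F}]\models\psi$. Since $\psi$ was an arbitrary axiom of $ZF$, we conclude $\mathbb{V}^{X}[\mathcal{F}]\models ZF$. One should also remark that $\mathbb{V}^{X}[\mathcal{F}]$ is a genuine classical two-valued structure (this is where genericity enters: the defining properties of a generic filter guarantee that the colimit structure decides every sentence and realizes existentials by actual elements, so that $\models$ is ordinary Tarskian satisfaction and classical logic is valid in it), which is exactly what makes "$\models ZF$" meaningful in the classical sense rather than merely the forcing sense.

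The main obstacle, or rather the only point demanding care, is the interplay between the G\"odel translation and the satisfaction relation: one must be sure that the classical first-order semantics of $\mathbb{V}^{X}[\mathcal{F}]$ matches up with the forced G\"odel translations coherently across all axioms simultaneously, including the axiom schemes (Separation, Replacement) which are infinite families — but this is immediate since the argument is uniform in $\psi$ and each instance is handled identically. A secondary subtlety is checking that $\psi^G$ is intuitionistically forced whenever $\psi$ is; if one prefers not to rely on the shortcut "$\Vdash\psi \Rightarrow \Vdash\psi^{G}$", one can instead cite the Friedman--Dragalin style result that $ZF$ proves $\psi$ iff intuitionistic $ZF$ proves $\psi^{G}$, combined with the soundness of the sheaf semantics for intuitionistic $ZF$ (which is what the theorem $\mathbb{V}^{X}\Vdash ZF$ already encodes), and then apply the characterization theorem as above. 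Either way the corollary drops out with essentially no computation.
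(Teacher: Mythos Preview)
Your proposal is correct and matches the paper's intended approach: the paper states the corollary without proof, as an immediate consequence of the two theorems you invoke (that $\mathbb{V}^{X}\Vdash ZF$ and the characterization $\mathbb{V}^{X}[\mathcal{F}]\models\varphi \Leftrightarrow \exists P\in\mathcal{F}\ \Vdash_P\varphi^{G}$). Your explicit handling of the passage to the G\"odel translation via the intuitionistic validity of $\varphi\rightarrow\varphi^{G}$, together with the observation that $X\in\mathcal{F}$ supplies the witnessing open set, is exactly the argument the paper leaves implicit.
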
 

As an example  we can choose the base space $X$ in such a way that in the collapsed model of $\mathbb{V}^{X}$ the Axiom of Choice is valid but the Continuum  hypothesis does not hold, this is the intrinsic essence of  Paul Cohen's proof (see \cite{benavides}, \cite{benavides3} \cite{cohen}).\\

To see how these results can be applied to Quantum Mechanics, lets consider again the cumulative hierarchy of quantum variable sets over the spectrum $X=S_\mathcal{U}$ of an abelian Von Neumann sub algebra of the algebra $B(H)$ of bounded operators of a Hilbert space $H$. The  important result  here is that each history $\lambda \in X$ determines a Generic Filter.

\begin{teor} Considering $\lambda \in X=S_{\mathcal{U}}$, the set 
\[\mathcal{F}_{\lambda}=\{P\in Op(X): \lambda(P')=1\},\]
where $P'$ is the projection operator  associated to the open set $P$, is a generic filter of $\mathbb{V}^{X}$.
\end{teor}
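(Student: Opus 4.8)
The plan is to invoke Theorem \ref{generic}, which reduces the claim to showing that $\mathcal{F}_{\lambda}$ is a \emph{maximal} filter of open sets of $X$. Before starting I would set up the dictionary between the two structures at play. Recall that the topology on $X=S_{\mathcal{U}}$ has as its nonempty open sets exactly the sets $P=\{\sigma\in S_{\mathcal{U}}:\sigma(P')=1\}$ with $P'\in P(\mathcal{U})$, and that $P\mapsto P'$ is an isomorphism from the algebra of open sets of $X$ onto the Boolean algebra $P(\mathcal{U})$; in particular the projection attached to $P\cap Q$ is the product $P'Q'$ (the meet, since $\mathcal{U}$ is abelian), the one attached to $P\cup Q$ is $P'+Q'-P'Q'$, and the one attached to the complement $P^{c}$ is $I-P'$. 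Since $\lambda\in S_{\mathcal{U}}$ is multiplicative of norm one we have $\lambda(I)=1$ and $\lambda(P')=\lambda(P'^{2})=\lambda(P')^{2}$, so $\lambda$ takes only the values $0$ and $1$ on $P(\mathcal{U})$, i.e. $\lambda\!\upharpoonright_{P(\mathcal{U})}$ is a two--valued Boolean homomorphism.

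First I would check the three filter axioms. We have $X\in\mathcal{F}_{\lambda}$ because the projection attached to $X$ is $I$ and $\lambda(I)=1$. If $P,Q\in\mathcal{F}_{\lambda}$ then the projection attached to $P\cap Q$ is $P'Q'$ and $\lambda(P'Q')=\lambda(P')\lambda(Q')=1$, so $P\cap Q\in\mathcal{F}_{\lambda}$. Finally, if $P\in\mathcal{F}_{\lambda}$ and $P\subseteq Q$ with $Q$ open, then $P'\leq Q'$, hence $Q'-P'$ is a projection and $\lambda(Q')=\lambda(P')+\lambda(Q'-P')\geq 1$; since $\lambda(Q')\in\{0,1\}$ this forces $\lambda(Q')=1$, i.e. $Q\in\mathcal{F}_{\lambda}$. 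So $\mathcal{F}_{\lambda}$ is a filter.

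It remains to prove maximality, and here the key point is that every open set $P$ of $X$ is clopen, so $P^{c}$ is open with attached projection $I-P'$, whence $\lambda((P^{c})')=1-\lambda(P')$. Thus for every open $P$ at least one of $P,P^{c}$ lies in $\mathcal{F}_{\lambda}$. Now suppose $\mathcal{G}$ is a filter of open sets with $\mathcal{F}_{\lambda}\subseteq\mathcal{G}$ and take any $P\in\mathcal{G}$. If $P\notin\mathcal{F}_{\lambda}$ then $P^{c}\in\mathcal{F}_{\lambda}\subseteq\mathcal{G}$, so by the intersection axiom $\emptyset=P\cap P^{c}\in\mathcal{G}$, which is impossible since $\emptyset\notin Op(X)$. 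Hence $P\in\mathcal{F}_{\lambda}$, so $\mathcal{G}=\mathcal{F}_{\lambda}$, and $\mathcal{F}_{\lambda}$ is maximal. By Theorem \ref{generic}, $\mathcal{F}_{\lambda}$ is a generic filter of $\mathbb{V}^{X}$.

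The argument is essentially routine once the correspondence between open sets and projections is in place; the only delicate point is the nontriviality of the filter (that it omits $\emptyset$), which is exactly what forbids $P$ and $P^{c}$ from both belonging to any filter and thereby yields maximality. If one preferred a direct verification against the definition of generic filter rather than appealing to Theorem \ref{generic}, the main obstacle would be clause (ii): from $\Vdash_{P}\exists v\,\varphi(v,\sigma_{1},\dots,\sigma_{n})$ one must, after shrinking $P$ inside $\mathcal{F}_{\lambda}$, exhibit an actual section witnessing $\varphi$ on a member of $\mathcal{F}_{\lambda}$, and this is precisely where the ultrafilter property of $\mathcal{F}_{\lambda}$ is needed to collapse the Kripke--Joyal cover produced by the existential.
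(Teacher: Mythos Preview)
Your proof is correct and follows essentially the same route as the paper: verify the filter axioms via the multiplicativity of $\lambda$ on commuting projections, show maximality by observing that $\lambda((P^{c})')=1-\lambda(P')$ so that $P\notin\mathcal{F}_{\lambda}$ forces $P^{c}\in\mathcal{F}_{\lambda}$, and then invoke Theorem~\ref{generic}. The only cosmetic difference is in the upward-closure step: the paper argues directly that $R\subseteq S$ and $\lambda\in R$ give $\lambda\in S$, hence $\lambda(S')=1$, whereas you use the additive decomposition $\lambda(Q')=\lambda(P')+\lambda(Q'-P')$; both are immediate.
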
 

\begin{proof}
Remember that  $P=\{\sigma\in X: \sigma(P')=1\}$. If $P, Q\in \mathcal{F}_{\lambda}$, we have  $\lambda(P')=1=\lambda(Q')$, then 
\[\lambda((P\cap Q)')=\lambda(P'\wedge Q')=\lambda(P'Q')=\lambda(P')\lambda(Q')=1.\] Then $P\cap Q\in \mathcal{F}_{\lambda}$. On the other hand if $R\in \mathcal{F}_{\lambda}$ and $S\in Op(X)$ is such that $R\subset S$, this implies that $\lambda \in S$, therefore $\lambda(S')=1$ then $S\in \mathcal{F}_\lambda$. Finally as $X'=I$ where $I$ is the identity operator, we have  for any operator $P'$ and any $\lambda$ in $X$
\[\lambda(P')=\lambda(IP')=\lambda(I)\lambda(P'),\]
then $\lambda(I)=1$, thus $X\in F_{\lambda}$. We have  shown that $\mathcal{F}_{\lambda}$ is a filter, to see that it is maximal suppose that $\mathcal{G}$ is a filter such that $\mathcal{F}_{\lambda}\subset \mathcal{G}$. Let $P\in \mathcal{G}\setminus \mathcal{F}_{\lambda}$, then $\lambda (P')=0$. Therefore 
\[\lambda((P^{c})')=\lambda((X\setminus P)')=\lambda(I)-\lambda(P')=1,\]
then $P^{c}\in \mathcal{F}_{\lambda}$. So we conclude that $P^{c}\in\mathcal{G}$ and $\emptyset=P\cap P^{c}\in \mathcal{G}$, then $\mathcal{G}=Op(X)$. Thus $F_{\lambda}$ is a maximal filter and by theorem \ref{generic}  a generic filter.
\end{proof}

This last result tells us that to each history corresponds a classical universe, the collapsed universe $\mathbb{V}^{X}[\mathcal{F}_{\lambda}]$ obtained from the sheaf $\mathbb{V}^{X}$ via the generic filters $\mathcal{F}_{\lambda}$. In  collapsed universes the propositions about the Quantum System assume  definite truth values as in classical reality. And these truth values depend on the structure of similar histories, which is represented by the filters $\mathcal{F}_{\lambda}$ which structures depend themselves on the structure of the Hilbert Space from which they arise.  To understand how this process works we will need to complete the reformulation of Quantum Mechanics within these models, but if this approach to understand emergence of classical reality  is correct, it will reflect a new way to understand  emergence in physics. In this case we will not have the emergence as a weak classical limit where some parameters as velocity, scale or gravity tend to certain values but we will have a kind of emergence more close to the sense of emergence as intended in computation or biology, where a lower level structure, represented in this case by a sheaf of structures, determines what is perceived by a higher level structure represented in this case by the collapsed Generic Models.\\

 Any satisfactory reformulation of QM should have three fundamental characteristics; first it has to settle a definite interpretation of the theory, second it has to  explain the emergence of classical reality and finally it has to be flexible enough to be extended to General Relativity. In this work we saw that the model  presented  here is a good candidate to fulfil the first two conditions. And probably these kind of models are  perfect to include GR from a totally new perspective (see \cite{bell} for example). All these results show that probably we  really need a formalism founded over a quantum logic to understand quantum reality and to construct quantum gravity theories.

\end{document}